\newcommand\version{0}
\tikzstyle{none}=[inner sep=0pt]
\tikzstyle{rn}=[circle,fill=Red,draw=Black,line width=0.8 pt]
\tikzstyle{gn}=[circle,fill=Lime,draw=Black,line width=0.8 pt]
\tikzstyle{yn}=[circle,fill=Yellow,draw=Black,line width=0.8 pt]
\tikzstyle{simple}=[-,draw=Black,line width=2.000]
\tikzstyle{arrow}=[-,draw=Black,postaction={decorate},decoration={markings,mark=at position .5 with {\arrow{>}}},line width=2.000]
\tikzstyle{tick}=[-,draw=Black,postaction={decorate},decoration={markings,mark=at position .5 with {\draw (0,-0.1) -- (0,0.1);}},line width=2.000]
\tikzstyle{none}=[inner sep=0pt]
\tikzstyle{rn}=[circle,fill=Red,draw=Black,line width=0.8 pt]
\tikzstyle{gn}=[circle,fill=Lime,draw=Black,line width=0.8 pt]
\tikzstyle{yn}=[circle,fill=Yellow,draw=Black,line width=0.8 pt]
\tikzstyle{simple}=[-,draw=Black,line width=2.000]
\tikzstyle{arrow}=[-,draw=Black,postaction={decorate},decoration={markings,mark=at position .5 with {\arrow{>}}},line width=2.000]
\tikzstyle{tick}=[-,draw=Black,postaction={decorate},decoration={markings,mark=at position .5 with {\draw (0,-0.1) -- (0,0.1);}},line width=2.000]
\theoremstyle{plain}
\newtheorem{theorem}{Theorem}[section]
\newtheorem{proposition}[theorem]{Proposition}
\newtheorem{lemma}[theorem]{Lemma}
\theoremstyle{definition}
\newtheorem{procedure}[theorem]{Procedure}
\newtheorem{definition}[theorem]{Definition}
\newtheorem{remark}[theorem]{Remark}
\newtheorem{example}[theorem]{Example}
\def\Tr{\ensuremath{\mathrm{Tr}}}
\def\Re{\ensuremath{\mathrm{Re}}}
\def\T{\ensuremath{\mathrm{T}}}
\def\Ker{\ensuremath{\mathrm{Ker}}}
\newcounter{jamiecomment}
\newcounter{dominiccomment}
\newcommand\DVcomm[1]{\ensuremath{{}^{\color{blue}\thedominiccomment}}\marginpar{\color{blue}\tiny\raggedright \thedominiccomment: #1}\stepcounter{dominiccomment}}
\newcommand\ignore[1]{}
\def\C{\mathbb{C}}
\def\Z{\mathbb{Z}}
\def\d{\mathrm{d}}
\newcommand{\bra}[1]{\left\langle #1 \right|\,}
\newcommand{\ket}[1]{\,\left| #1 \right\rangle}
\newcommand{\braket}[2]{\left.\left\langle {#1} \,\right|\,{#2} \right\rangle }
\DeclarePairedDelimiter\floor{\lfloor}{\rfloor}
\DeclarePairedDelimiter\abs{|}{|}
\def\SU{\ensuremath{\mathrm{SU}}}
\def\U{\ensuremath{\mathrm{U}}}
\def\SO{\ensuremath{\mathrm{SO}}}
\def\BOct{\ensuremath{\mathrm{BOct}}}
\def\BTet{\ensuremath{\mathrm{BTet}}}
\def\Tet{\ensuremath{\mathrm{Tet}}}
\newcommand{\infiniteabstract}{\begin{abstract}
We present two new schemes for quantum teleportation between parties whose local reference frames are misaligned by the action of a compact Lie group $G$. These schemes require no prior alignment of reference frames and are unaffected by arbitrary changes in reference frame alignment during execution, suiting them to situations of rapid reference frame drift. Our \textit{tight} scheme yields improved purity compared to standard teleportation, in some cases substantially --- this includes the case of qubit teleportation under arbitrary $\SU(2)$ reference frame uncertainty--- while communicating no information about either party's reference frame alignment at any time. Our \emph{perfect} scheme performs perfect teleportation, but does communicate some reference frame information. The mathematical foundation of these schemes is a unitary error basis permuted up to a phase by the conjugation action of a finite subgroup of $G$.
\end{abstract}}
\newcommand{\infiniteacknowledgements}{The authors thank Niel de Beaudrap, Matty Hoban,  Carlo-Maria Scandolo and Nathan Walk for useful discussions regarding measures of channel quality. They are also grateful to Jean-Philippe Bourgoin, Matthias Fink, Reiner Kaltenbaek and all others who shared their expertise at the 
Lisbon Training Workshop on Quantum Technologies in Space. The first author acknowledges support from the Engineering and Physical Sciences Research Council.
}
\newcommand{\infinitetitleetc}[1]{\ifnumcomp{#1}{=}{1}{\title{Quantum teleportation with infinite reference frame uncertainty}
\date{\today}
\author{Dominic Verdon}\email{dominic.verdon@bristol.ac.uk}
\affiliation{School of Mathematics, University of Bristol, University Walk, Bristol BS8 1TW}
\author{Jamie Vicary}\email{j.o.vicary@bham.ac.uk}
\affiliation{School of Computer Science, University of Birmingham, University Road\ West, Birmingham B15 2TT} \infiniteabstract \maketitle}
{
\title{\vspace{-2.5cm}\bf \LARGE \mbox{Quantum teleportation with infinite reference} \mbox{frame uncertainty} and without prior alignment}
\author{\vspace{-.70cm}
\\
Dominic Verdon\thanks{\texttt{dominic.verdon@cs.ox.ac.uk}} \ and Jamie Vicary\thanks{\texttt{jamie.vicary@cs.ox.ac.uk}}
\\
Department of Computer Science, University of Oxford\vspace{-5pt}}
\maketitle \vspace{-1cm} \infiniteabstract}}
\newcommand{\infiniteintroacknowledgements}[1]{\ifnumcomp{#1}{=}{1}{}{
\infiniteacknowledgements}}
\newcommand{\infinitepraacknowledgements}[1]
{\ifnumcomp{#1}{=}{1}{\acknowledgements{\infiniteacknowledgements}}{}}
\newcommand{\biblstyle}[1]{\ifnumcomp{#1}{=}{1}{}{\bibliographystyle{plainurl}}}
\begin{document}

\infinitetitleetc{\version}
\section{Overview}
\label{sec:intro}
\paragraph{Motivation.}

A shared reference frame is an important implicit assumption underlying the correct execution of many multi-party quantum protocols~\cite{Bartlett2007, Marvian2013, Marvian2014, Kitaev2004, Enk2001, Gottesman1999, Gisin2002}. As quantum technologies move into space~\cite{Ren2017, Yin2017, Bacsardi2013} and into handheld devices~\cite{Wabnig2013, Duligall2006, Duligall2007}, scenarios where this assumption is violated are naturally encountered. This problem has already received considerable attention in the case of ground-to-satellite quantum key distribution~\cite{Laing2010, Liang2014, Bacsardi2013}; there is also a smaller body of work on quantum teleportation without a shared reference frame~\cite{Chiribella2012, Marzolino2015, Marzolino2016}, a subject which is increasingly important as quantum repeaters~\cite{Muralidharan2016} and ground-to-satellite quantum teleportation~\cite{Ren2017} become experimentally viable.

Prior alignment of reference frames~\cite{Bartlett2007,Skotiniotis2012,Islam2014,Islam2016,Peres2002} may become impractical in the case of time-varying misalignment, or where the parties are far apart; prior alignment also involves communication of reference frame information, which may be cryptographically sensitive~\cite{Ioannou2014, Bartlett2004, Kitaev2004}. Another approach involves the use of decoherence-free subspaces~\cite{Lidar2003}; because this requires larger Hilbert spaces, practical implementation can be nontrivial, although experimental solutions have been developed for optical systems~\cite{DAmbrosio2012}.

\paragraph{Our approach.}
We use a classical channel whose configurations are interpreted with respect to the local reference frame, such as might be used for prior alignment. Indeed, such a channel could be used to align frames by observing how a pre-agreed configuration transmitted by Alice is perceived by Bob. However, this does \emph{not} occur in our schemes; in particular, our schemes work when rapidly-varying reference frame alignment renders prior alignment impossible, and our tight scheme in fact communicates no information about either party's frame configuration at any time. Rather, in our schemes, Alice communicates \emph{the measurement result itself} using this channel.  If the parties' frames are not aligned, Bob will perform correction operations with respect to his own frame; these may not correspond to the measurement Alice performed, causing error. In our approach, however, the misalignment also causes errors in transmission of the measurement result; Bob may receive a different index to that sent by Alice. These errors are correlated, and our key idea is to construct schemes where they cancel out.

\paragraph{Equivariant unitary error bases.} A standard teleportation protocol can be described mathematically in terms of a \textit{unitary error basis} (UEB)~\cite{Werner2001}, a basis of unitary operators on a Hilbert space $\C^d$ which are orthogonal under the trace inner product. Let $G$ be a finite reference frame transformation group; we define a UEB to be \textit{$G$-equivariant} when its elements are permuted up to a phase under conjugation by $\rho(g)$ for any $g \in G$, where $\rho:G \to U(d)$ is the representation of $G$ on Bob's system~\cite{Bartlett2007}.

Equivariant UEBs are the mathematical foundation of our teleportation schemes. In previous work we exhaustively classified these for qubit systems~\cite[Thm. 4.1]{Verdon2018}; they exist precisely when the image of the composite homomorphism $G \stackrel \rho \to \U(2) \stackrel \tau \to \SO(3)$ is isomorphic to 1, $\Z_2$, $\Z_3$, $\Z_4$, $D_2$, $D_3$, $D_4$, $A_4$ or $S_4$, where $\tau$ is the obvious projection. We also provided constructions in higher dimension, and a method for proving nonexistence in some cases.

\paragraph{Tight scheme.} For any finite subgroup $H \subseteq G$ admitting an $H$-equivariant UEB, we construct a \emph{tight} teleportation scheme immune to reference frame errors arising from $H$. When $H=G$, the protocol allows error-free teleportation. When $G$ is larger than $H$, the protocol roughly allows us to `quotient' by the subgroup $H$, restricting the error to a \emph{fundamental domain} for $H$ in $G$. (See Figure~\ref{fig:fundamentaldomainintro}.)
\begin{figure}\centering
\includegraphics[scale=1]{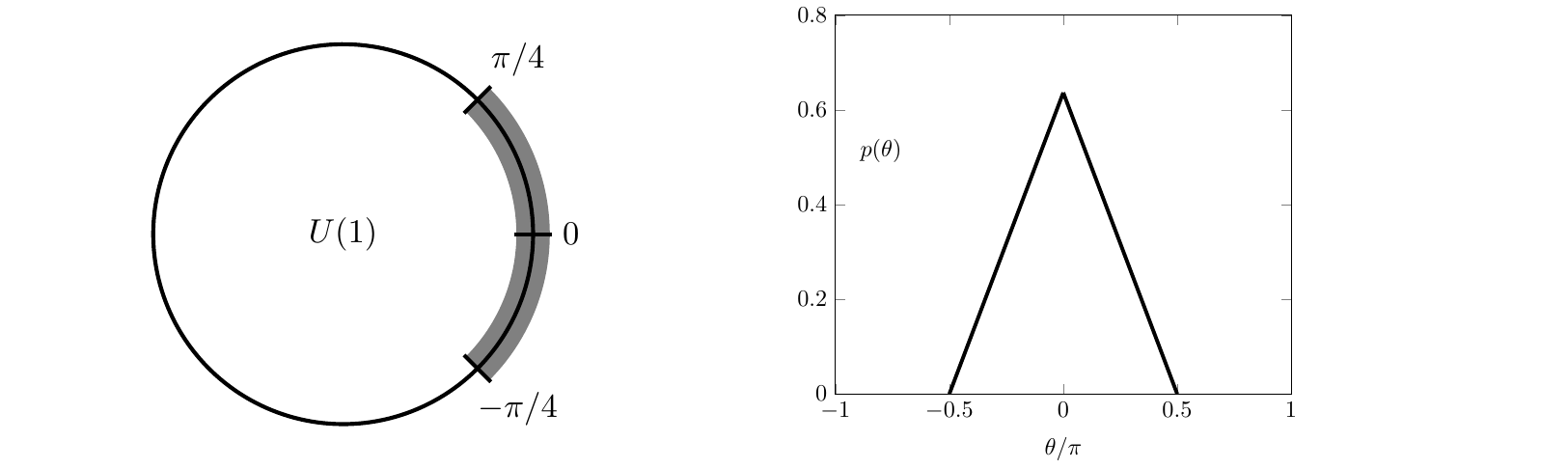}
\caption{\small The effective channel for a conventional protocol with uniform $\U(1)$ reference frame uncertainty is a uniform average over the channels induced by all misalignments $\theta \in [-\pi, \pi)$.  The cyclic subgroup $\mathbb{Z}_4 \subset U(1)$ possesses an equivariant UEB, allowing our tight scheme to `quotient out' $\mathbb{Z}_4$ reference frame uncertainty. Roughly, this reduces uncertainty to the region $\theta \in (-\pi/4,\pi/4)$ highlighted in the left subfigure; more precisely, the average over all misalignments is now weighted by $p(\theta)$, shown in the right subfigure.}
\label{fig:fundamentaldomainintro}
\end{figure}
 This can result in significant improvements in channel purity\footnote{We define this as the purity of the Choi-Jamio\l kowski state associated to the quantum channel induced by the protocol, where we take a convex sum over all frames $g \in G$ weighted by the Haar measure. Where figures are computed by numerical methods we give an error range in the reported figure.} compared to conventional teleportation, even for infinite compact Lie groups. For $G = \SU(2)$, for example, corresponding to arbitrary reference frame uncertainty for a qubit system, standard teleportation yields an average channel purity of $0.21$; with our tight scheme for the subgroup $\BOct \subset \SU(2)$, where $\BOct$ is the binary octahedral group, we obtain a channel purity of $0.44 \pm 0.03$, more than double that for standard teleportation. The results are shown in Table~\ref{tbl:numericsintro}. 
\begin{table}
\centering
\begin{tabular}{| c | c | c |}
Transformation group & Conventional purity & New tight scheme purity \\
\hline 
$\U(1)$ & 0.59 &        0.65 (matched channel) \\[6pt]
$\SU(2)$ & 0.21 & \parbox{5cm}{$0.32 \pm 0.02$ (matched channel) \\[-5pt] $0.44 \pm 0.03$ (rod channel)}
\end{tabular}
\caption{Qubit teleportation using a matched channel for $\U(1)$ and $\SU(2)$ reference frame uncertainty. The numbers shown are the purities of the effective quantum channels.}
\label{tbl:numericsintro}
\end{table}
The tight scheme additionally possesses the following desirable properties:
\begin{itemize}
\item \emph{Dynamical robustness (DR).} It is unaffected by arbitrary changes in reference frame alignment during transmission of the measurement result, provided Bob's frame alignment remains approximately constant between his receipt of the measurement result and his performance of the unitary correction.
\item \emph{Minimal entanglement (ME).} The parties only require a  $d$-dimensional maximally entangled resource state.
\item \emph{Minimal communication (MC).} Only $2$ dits of classical information are communicated from Alice to Bob.
\item \emph{No reference frame leakage (NL).} No information about either party's reference frame alignment at any time is communicated. (This property is of cryptographic significance~\cite{Ioannou2014, Bartlett2004, Kitaev2004}.)
\end{itemize}

\paragraph{Perfect scheme.} The tight scheme yields an improvement in the quality of the channel. Our perfect scheme, on the other hand, performs \emph{perfect} teleportation, up to a global phase, while retaining properties (DR) and (ME) and without communicating full information about Alice's frame configuration at the time of measurement.  To achieve this, additional reference frame information is transmitted by Alice in the same package as the measurement result, reducing reference frame uncertainty exactly to the finite group $H$, for which perfect teleportation is possible. Our techniques allow us to `fold' the measurement result in with the reference frame information, obviating the need to communicate it through a separate channel and, importantly, maintaining the novel (DR) property.

\ignore{
\paragraph{Worked examples.}We obtain a full expression for the channel induced by the tight scheme in general, and present the following examples in full detail:
\begin{itemize}
\item Phase reference frame uncertainty $G=\U(1)$ (\cite[Section 6]{VerdonInfinite}). Alice and Bob share an optical link; systems are photon polarisation states and the reference frame is a 2\-dimensional Cartesian frame determining the $x$- and $y$-polarisation axes. Alice encodes the measurement result in the linear polarisation direction of a pulse of classical light.
\item Spatial reference frame uncertainty  $G=\SU(2)$ (\cite[Section 2]{VerdonInfinite}). Systems are spin\-$\frac 1 2$ particles and the reference frame is a 3-dimensional Cartesian frame determining the spin axes. Alice encodes the measurement result in the orientation of a direction-bearing classical object sent to Bob by parallel transport.
\end{itemize}
}

\paragraph{Related work.}
Chiribella et al~\cite{Chiribella2012} argued that, when the reference transformation group is a  continuous compact Lie group, there is no teleportation procedure yielding perfect state transfer. They did not consider transmission of the measurement result in a reference frame--dependent manner, and their no-go theorem therefore does not apply to our results.

Some other approaches for finite $G$ can be found in the literature. These rely on a variety of techniques: using additional pre-shared entanglement~\cite{Chiribella2012}; sharing additional entanglement during the protocol~\cite{Kitaev2004}; and transmitting more complex resources~\cite[Section V.A]{Bartlett2007}.
None of these share the (DR) property, and they all require additional resources and additional quantum operations.

\paragraph{Outlook.} 
Work has been done on reference frame--independent quantum key distribution between handheld devices sharing an optical link~\cite{Wabnig2013, Duligall2006, Duligall2007}; such devices seem an obvious application for our perfect scheme for $\U(1)$ uncertainty. 
\ignore{We believe that our overall approach may be extended to perform other multi-party protocols, including quantum key  distribution~\cite{Bennett2014,Ekert1991}, in the case of rapidly varying frame alignment, and will pursue this in future work.}
There may also be cryptographic applications for these results, as it has been noted that a private shared reference frame may be used as a secret key~\cite{Kitaev2004, Ioannou2014, Bartlett2004}, and our tight scheme does not leak reference frame information.

\ignore{

\paragraph{Outline.} 
In Section~\ref{sec:example} we give an informal example of our approach in the case of spatial reference frame uncertainty. In Section~\ref{sec:theorysection} we make more precise definitions, define the Type A and B teleportation procedure, and derive an expression for the induced teleportation channel. In Section~\ref{sec:encodings} we define Type A and Type B encoding schemes. In Section~\ref{sec:referenceframeencodings} we give a general construction for compatible Type A and B encoding schemes on matched channels and characterise when Type C teleportation schemes are possible. In Section~\ref{sec:example2} we apply our reference frame constructions to a realistic example of phase reference frame uncertainty. In Section~\ref{sec:numerics} we derive the numerical results in Table~\ref{tbl:numericsintro}. In the appendices we prove some results about the effect of reference frame changes on states and operations, the main theorem from Section~\ref{sec:theorysection}, and some facts about Voronoi cells used in Section~\ref{sec:numerics}.}

\ignore{\subsection{Acknowledgements}}

\infiniteintroacknowledgements{\version}

\section{Examples}
\label{sec:examples}
We begin with two illustrative examples. 
\subsection{Example 1: phase reference frame uncertainty}
\label{sec:example2}
\paragraph{Physical setup.}
Alice and Bob share an optical link along a line of sight; through this link they can perform quantum or classical communication, mediated by individual photons or beams of classical light.  Alice transfers one half of a polarisation-entangled pair of photons to Bob through the optical link, which can be used to teleport the state $\sigma$ of a qubit in her possession. However, they do not share a Cartesian frame defining the $x$- and $y$-polarisation axes in the plane perpendicular to the axis of the link. Due to frame misalignment, Bob's description of the polarisation state of the transmitted photon may differ from Alice's~\cite{Laing2010}.

The reference frame transformation group here is the two-dimensional rotation group $\U(1)$. If $\theta \in [0,2\pi)$ is the angle of a clockwise rotation of the 2D Cartesian frame, $U(1)$ acts as follows on the polarisation state:
\begin{equation}
\label{eq:onlyspeakablecommunication}
\theta \mapsto \rho(\theta)=\begin{pmatrix}
1 & 0 \\ 0 & e^{- i \theta}
\end{pmatrix}
\end{equation}
Here the vector acted on by the matrix is $(v_L, v_R)^T$, where $v_L$ is the left and $v_R$ the right circular polarisation coefficient. The transformation $g(t) \in \U(1)$ which relates Alice and Bob's frames at time $t$ is unknown, and may vary non-negligibly on timescales shorter than the message transmission time between the parties, rendering prior alignment impossible.

\paragraph{Conventional scheme.} Alice creates a polarisation-entangled photon pair 
$$\eta= \frac{1}{\sqrt{2}}(\ket{00}+\ket{11}).$$ She communicates one photon to Bob through the optical link, and measures the the other, together with the state $\sigma$, in the maximally entangled orthonormal basis $\ket {\phi_i} = (\mathbbm{1} \otimes U_i^T) \ket \eta$, where $U_i$ are the Pauli matrices:
\begin{align}
\label{eq:paulis}
U_0 &= \begin{pmatrix} 1 & 0 \\ 0 & 1 \end{pmatrix}
&
U_1 &= \begin{pmatrix} 0 & 1 \\ 1 & 0 \end{pmatrix}
&
U_2 &=  \begin{pmatrix} 0 & -i\\ i & 0 \end{pmatrix}
&
U_3 &= \begin{pmatrix} 1 & 0 \\ 0 & -1 \end{pmatrix}
\end{align}
She communicates the result to Bob through an ordinary classical channel, who applies the correction $U_i$ to his half of the entangled state.
Should both parties' reference frames be aligned, Bob's system will finish in the state $\sigma$; this is because the Pauli matrices form a \emph{unitary error basis} (UEB), a structure we will define later.

However, if Bob's frame is related to Alice's by a nontrivial transformation $g \in \U(1)$, then from the perspective of Alice's frame, Bob will not perform the intended correction $U_i$, but rather the conjugated unitary\footnote{For a proof, see Appendix~\ref{sec:refframetransfrulesproof}.}
\begin{equation}\label{eq:rightconjactionfirstappears}\rho(g)^{\dagger} U_i \rho(g).\end{equation}
\noindent
The transformation $g$ is unknown, so we must average over the whole of $U(1)$ to find the effective channel, yielding the following expression:
\begin{equation}\label{eq:integralfromexample}
\mathcal{T}_i(\sigma) = \int_{U(1)} \d g \;[\rho(g)^{\dagger} U_i \rho(g) U_i^{\dagger}] \, (\sigma)
\end{equation}
Here $\d g$ is the Haar measure on $U(1)$, and we have used the notation $[X](\sigma)$ for the conjugation $X \sigma X^{\dagger}$. Averaging over the four equiprobable measurement results, we  find (Section~\ref{sec:u1calculations}) that the  effective channel for a conventional scheme has the following effect on an input density matrix:
\begin{equation*}
\begin{pmatrix}
a & b \\ c & d
\end{pmatrix} \mapsto
\begin{pmatrix}
a & b/2 \\ c/2 & d
\end{pmatrix}
\end{equation*}

\paragraph{Tight scheme.} Alice measures as before, but now transmits her measurement result using a beam of polarised classical light sent along the optical link, according to the following prescription. If she measures 0 or 3, she transmits a beam of clockwise or anticlockwise circularly polarised light respectively; since the direction of circular polarisation is preserved under reference frame transformations, Bob will receive the measurement result as it was sent. If she measures 1 or 2, she sends the measurement result encoded in the polarisation axis of a beam of linearly polarised light, which is chosen using the regions in Figure~\ref{fig:examplemeasregions}: if she measures 1 or 2, she sends the light linearly polarised along an axis selected uniformly at random from the region $R_1$ or $R_2$ respectively. Bob then observes the polarisation direction of the light he receives respect to his own frame and decodes in the inverse manner, performing the correction as before.  The rationale behind this choice of encoding will be made clear in Section~\ref{sec:theorysection}.

\begin{figure}
\centering
\includegraphics[scale=1]{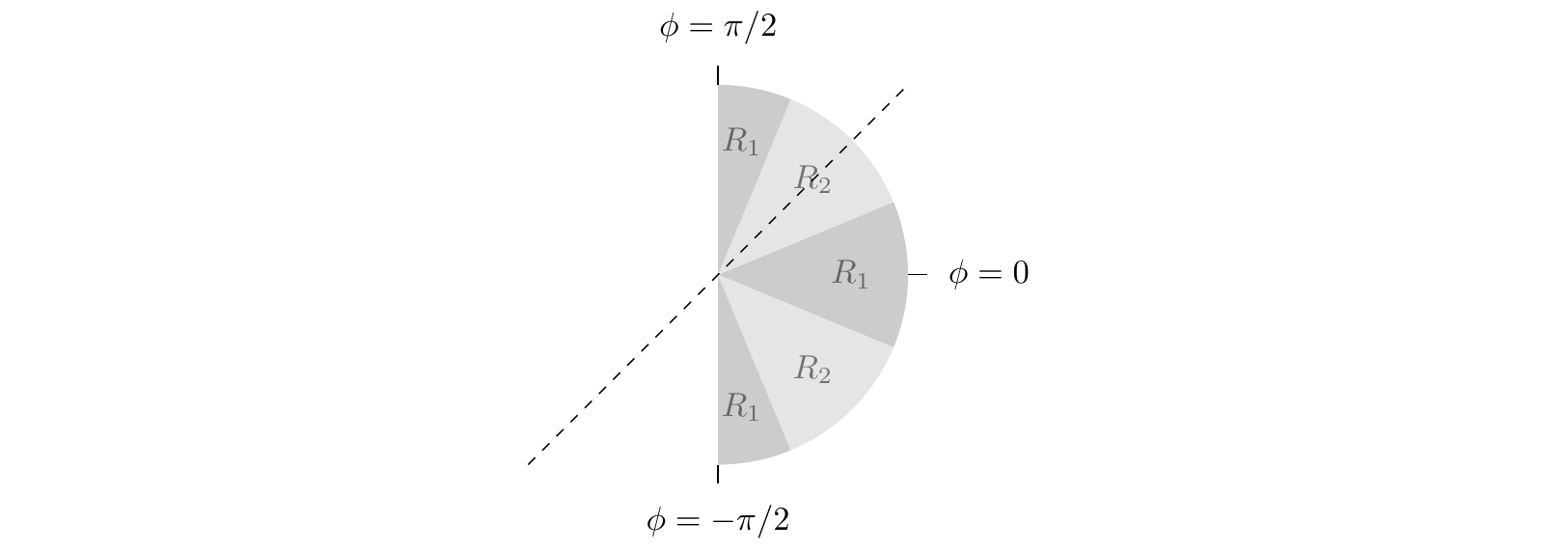}
\caption{The regions $R_1$ and $R_2$. The polarisation axis of a beam of light linearly polarised at angle $\theta=\pi/4$ is shown in the figure.}\label{fig:examplemeasregions}
\end{figure}%

This scheme is \emph{tight}. In particular, we highlight two of the properties listed in Section~\ref{sec:intro}:
\begin{itemize}
\item (NL). To an observer outside Alice's lab, the information she communicates is uniformly random. This follows from the fact that her measurement outcomes are equiprobable, and given the measurement outcome $i$ all polarisation directions in the corresponding region are equiprobable. Therefore, nothing can be deduced from her transmission about her reference frame orientation.
\item (MC). There are four messages Bob can receive: left or right circularly polarised light, or light linear polarised through an axis in the region $E_1$ or $E_2$. All four messages are equiprobable. He therefore obtains precisely two bits of classical information.
\end{itemize}
We will see (Section~\ref{sec:u1calculations}) that the effective channel --- averaging over Alice's equiprobable measurement results --- has the following action on an input density matrix:
\begin{equation}\label{eq:u1dmtwirl}
\begin{pmatrix}
a & b \\ c & d
\end{pmatrix} \mapsto
\begin{pmatrix}
a & b \left(  \frac{2}{\pi^2} + \frac{1}{2} \right)  \\ c \left(  \frac{2}{\pi^2} + \frac{1}{2} \right)  & d
\end{pmatrix}
\end{equation}
The quality of the channel has increased, despite the fact that no reference frame information has been transmitted. In particular, the final state is now asymmetric even when Alice measures 1 or 2.
\ignore{
We remark that this is nothing other than \emph{teleportation of unspeakable information}. The unspeakable classical information transmitted by Alice contains no information about her reference frame configuration, or the state to be teleported; therefore, the correlated unspeakable information in Bob's eventual state can only have been transferred by teleportation. While this may seem at odds with~\cite{Chiribella2012}, the no-go theorem in that work applied only to perfect teleportation with frame--independent measurement result transmission.}

\paragraph{Perfect scheme.} For perfect teleportation, Alice need not transmit full information about the frame in which she measured, as shown by the following scheme. If Alice measures 0 or 3, she transmits a beam of left or right circularly polarised light respectively. If she measures 1 or 2, she transmits linearly polarised light with polar angle $0$ or $\pi/4$ respectively. If Bob receives circularly polarised light, he decodes as before. If he receives linearly polarised light in the region $E_1$ with respect to his own frame, he rotates his frame actively or passively so that the light is polarised along the axis with polar angle $0$ in his frame, and performs the correction $U_1$. If the polarisation direction is in the region $E_2$, he rotates his frame actively or passively so that the light is polarised along the axis with polar angle $\pi/4$ in his frame, and performs the correction $U_2$. We will see (Proposition~\ref{prop:perfecttel}) that this procedure results in perfect teleportation. However, the reference frame information communicated by this protocol is only sufficient to reduce reference frame uncertainty to a finite subgroup $\mathbb{Z}_4$.
\ignore{
This scheme possesses the (DR) property, since the parties' reference frame orientation need only remain approximately constant between Bob's receipt of the classical information and his performance of the corresponding correction; it clearly also uses an entangled resource of minimal dimension, and so satisfies (ME). However, the procedure violates (MC) and (NL), since Bob gains an infinite amount of information about Alice's reference frame alignment upon measurement.
}

\subsection{Example 2: spatial reference frame uncertainty}\label{sec:example}

\paragraph{Physical setup.}
Alice and Bob are spatially separated; their qubits are spin-$\frac{1}{2}$ particles. Alice plans to teleport a state $\sigma$ to Bob.  They each possess half of the following maximally entangled pair\footnote{Note that the entangled state is invariant under changes in reference frame, so both parties' frames may shift arbitrarily following its creation without affecting the quality of the entangled resource.}:
$$
\ket{\eta} = \frac{1}{\sqrt{2}}( \ket{01} - \ket{10})
$$
However, the Cartesian frame according to which Alice's $x$-, $y$- and $z$-spin axes are defined is related to Bob's by some unknown three-dimensional rotation. The reference frame transformation group is $\SU(2)$, which acts on a qubit Hilbert space $H$ by its standard matrix representation $\rho: \SU(2) \to B(H)$. Again, the transformation $g(t) \in \SU(2)$ which relates Alice's and Bob's frames at time $t$ is unknown, and may vary on timescales shorter than the message transmission time between the parties.

\paragraph{Conventional scheme.} Alice and Bob use the entangled state $\ket{\eta}$ to attempt a standard teleportation protocol~\cite{Bennett1993}, again based on the Pauli matrices~\eqref{eq:paulis}. Alice measures the state $\sigma$ together with her entangled qubit in the maximally entangled orthonormal basis $\ket {\phi_i} = (\mathbbm 1 \otimes -i(  U_i U_2)^T) \ket \eta$,\footnote{The $-i$ and the $U_2$ here correspond to the choice of maximally entangled state; see the discussion following Theorem~\ref{Wernertighttheorem}.} and communicates the measurement result to Bob through an ordinary classical channel; Bob then applies the correction $U_i$. We must average over all misalignments in $\SU(2)$ to find the effective channel. For measurement result $i$ we obtain the following expression:
\begin{equation}\label{eq:integralfromexample}
\mathcal{T}_i(\sigma) = \int_{\SO(3)} \d g \;[\rho(g)^{\dagger} U_i \rho(g) U_i^{\dagger}] \, (\sigma)
\end{equation}
Here $\d g$ is the Haar measure on $\SO(3)$. Averaging over the four equiprobable measurement results, we  find (Section~\ref{sec:su2calculations}) that the effective channel purity is approximately $0.21$.

\ignore{
However, if Bob's frame is related to Alice's by a nontrivial transformation $g \in \SU(2)$, then from the perspective of Alice's frame, Bob will not perform the intended correction $U_i$, but rather the conjugated unitary\footnote{For a proof, see Appendix~\ref{sec:refframetransfrulesproof}.}
\begin{equation}\label{eq:rightconjactionfirstappears}\rho(g)^{\dagger} U_i \rho(g).\end{equation}
\noindent
Since the conjugation action of $\SU(2)$ has kernel $\{ \pm I\}$, we only consider the quotient $\SO(3)$ in the following analysis. The transformation $g$ is unknown, so we must average over the whole of $\SO(3)$ to find the effective channel, which for measurement result $i$ yields the following expression:
\begin{equation}\label{eq:integralfromexample}
\mathcal{T}_i(\sigma) = \int_{\SO(3)} \d g \;[\rho(g)^{\dagger} U_i \rho(g) U_i^{\dagger}] \, (\sigma)
\end{equation}
Here $\d g$ is the Haar measure on $\SO(3)$, and we have used the notation $[X](\sigma)$ for the conjugation $X \sigma X^{\dagger}$. Averaging over the four equiprobable measurement results, we  find (Section~\ref{sec:su2calculations}) that the effective channel purity is approximately $0.21$.}

\paragraph{Tight scheme.}
Alice considers a cube centered at the origin of her frame, oriented so that the $x$-, $y$- and $z$-axes form normal vectors to its faces; we call the faces intersected by the $x$-, $y$- and $z$-axes the $1$-, $2$- and $3$-faces respectively. She measures in the basis $\{\ket{\phi_i}\}$, and transmits her measurement result using the encoding scheme given in Table~\ref{tbl:typeaencodingscheme}, and illustrated in Figure~\ref{fig:cubes}, which we summarize as follows.
\begin{table}
\centering
\begin{tabular}{|l|l|l|}

\hline \bf Measurement result & \bf Classical transmission \\
 \hline
0 & Featureless sphere \\
1 & Rod oriented along any axis intersecting the $1$-faces \\
2 & Rod oriented along any axis intersecting the $2$-faces \\
3 & Rod oriented along any axis intersecting the $3$-faces \\\hline
\end{tabular}
\caption{Tight encoding scheme for the rod channel. Alice chooses the precise orientation of the rod uniformly at random from the set of all orientations satisfying the intersection condition.}
\label{tbl:typeaencodingscheme}
\end{table}
If Alice receives measurement result 0, she sends a spherically symmetric object (in other words, a sphere) to Bob. Otherwise, if she receives measurement result $n \in \{1,2,3\}$, she prepares a rigid rod in an arbitary orientation in space, centred at the origin of her frame, such that it intersects the $n$-faces of the cube. She then sends this object to Bob by parallel transport.

When Bob receives the object from Alice, he performs the reverse of Alice's encoding scheme. If he receives the spherically symmetric object he performs correction $U_0$. If he receives a rod, he moves it by parallel transport to his origin, and observes which faces of the cube it intersects. Bob's cube will of course in general be oriented differently to Alice's, and so he may observe a different intersection than that encoded by Alice. Having observed an intersection with the $n$-faces, he then performs correction $U_n$.

\begin{figure}
\centering
\includegraphics[scale=1]{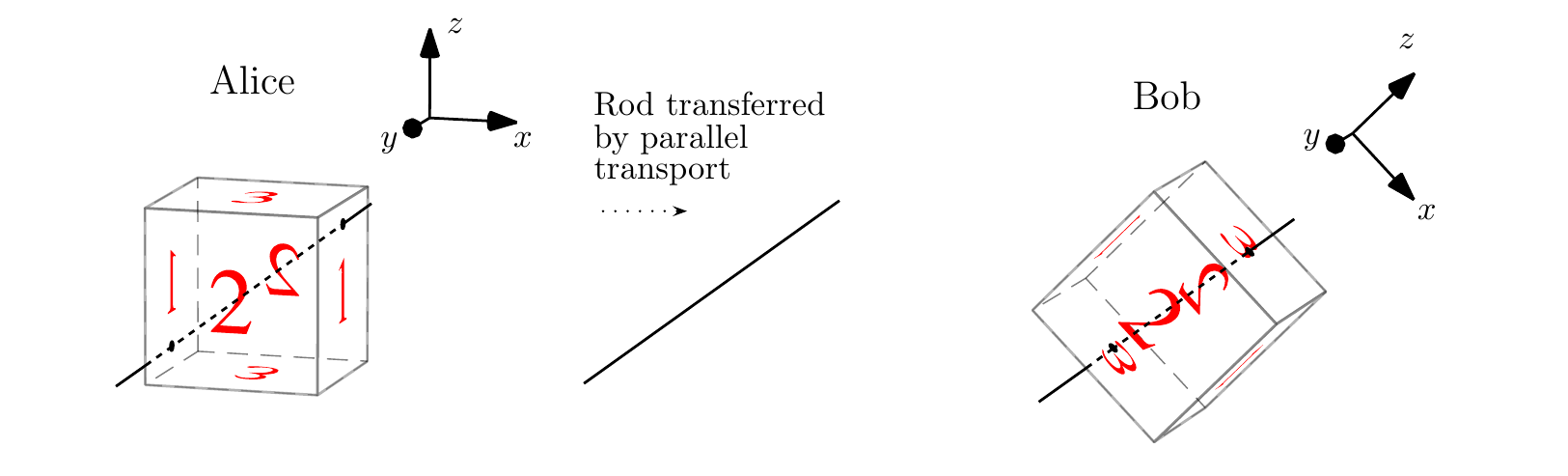}
\caption{Tight encoding scheme for  the rod channel. Alice measures 1, chooses at random an orientation of the rod which intersects the $1$-faces of the cube in her frame, and communicates the rod to Bob by parallel transport along a straight path. In Bob's frame, related to Alice's by a $\pi/4$-rotation around the $y$-axis, the rod intersects the $3$-faces; he therefore performs the correction $U_3$.}
\label{fig:cubes}
\end{figure}

In Section~\ref{sec:su2calculations} we numerically calculate the purity of the effective channel as $ 0.44 \pm 0.03$, approximately double the value for a conventional scheme.

This scheme is \emph{tight}, possessing in particular the  (NL) and (MC) properties, for exactly the same reasons as the previous example.
\ignore{
\begin{itemize}
\item (ME), (ML). Immediate.
\item (DR). The effective channel~\eqref{eq:type1integralexample} is unaffected by  changes in reference frame orientation during execution, as long as Bob's reference frame does not change between observing the rod in his lab and performing the corresponding correction.
\item (NL). To an observer outside Alice's lab, the information she communicates is uniformly random. This follows from the fact that her measurement outcomes are equiprobable, and given the measurement outcome $i$ all directions through the corresponding face pair of the cube are equiprobable. Therefore, nothing can be deduced from her transmission about her reference frame orientation.
\item (MC). There are four messages Bob can receive: a spherically symmetric object, or a rod oriented through the $i$-faces for some $i \in \{1,2,3\}$. All four messages are equiprobable. He therefore obtains precisely two bits of classical information.
\end{itemize}}

\paragraph{Perfect scheme.} Again, transmission of a full reference frame is unnecessary for perfect teleportation. We call the following family of unitary matrices the \emph{tetrahedral} qubit unitary error basis~\cite{Verdon2018}:
\begin{align}
\nonumber
V_0 &= \scriptsize \begin{pmatrix} 1 & 0 \\ 0 & e^{2 \pi i /3} \end{pmatrix}
&
V_2 &= \frac{1}{\sqrt{3}} \scriptsize\begin{pmatrix} 1 & \sqrt{2}e^{2 \pi i /3} \\ \sqrt 2 & e^{5 \pi i /3} \end{pmatrix}
\\[-9pt]
\label{eq:exampleueb}
\\[-5pt]
\nonumber
V_1 &= \frac 1 {\sqrt{3}} \scriptsize \begin{pmatrix} 1 & \sqrt{2}e^{4 \pi i/3}\\ \sqrt{2}e^{4 \pi i/3} & e^{5 \pi i/3} \end{pmatrix}
&
V_3 &= \frac {1} {\sqrt{3}} \scriptsize \begin{pmatrix} 1 & \sqrt{2} \\ \sqrt{2}e^{2\pi i/3} & e^{5 \pi i/3} \end{pmatrix}
\end{align}
Let $\Tet \subset \SO(3)$ be the subgroup preserving a regular tetrahedron centred at the origin with vertices:
\begin{align*}
v_0 = \hat z && v_1 = \frac 1 {3} (\sqrt{8}\hat x -\hat z) && v_2 = \frac 1 {3} (- \sqrt{2} \hat x + 2\sqrt{3} \hat y - \hat z) && v_3 = \frac 1 {3} (- \sqrt{2} \hat x - 2 \sqrt{3} \hat y - \hat z)\end{align*}
\noindent
We identify the elements of $\Tet \cong \SO(3)$ with the permutation they induce on these vertices.

Alice again measures in the basis $\{\ket {\phi_i}\}$, where $\ket{\phi_i} = (\mathbbm 1 \otimes -i(  V_i U_2)^T) \ket \eta$. To perform the classical communication, Alice uses a completely asymmetric classical object whose orientation exactly determines a frame of reference. In order to transmit the measurement result $i$, she aligns the asymmetric object so that the frame determined by its orientation matches her own Cartesian frame. She then rotates the object by an element $r^A \in \Tet$, according to the prescription in Table~\ref{tbl:typecexample}, and sends it to Bob. 
\begin{table}
\centering
\begin{tabular}{l | l | l }
Measurement result & Alice's rotation $r^A$ &  Bob's observation $r^B$ \\
\hline \hline 
0 & $()$ & $()$ or $(234)$ or $(243)$ \\
1 & $(132)$ & $(142)$ or $(132)$ or $(12)(34)$ \\
2 & $(123)$ & $(13)(24)$ or $(123)$ or $(143)$\\
3 & $(134)$ & $(134)$ or $(124)$ or $(14)(23)$
\end{tabular}
\caption{Type C encoding scheme for the matched channel.}
\label{tbl:typecexample}
\end{table}
Bob observes the orientation of the object according to his own Cartesian frame, and realigns his frame (actively or passively) by the smallest possible angle so that the rotation $r^B$ taking his frame onto that determined by the orientation of the asymmetric object is in $\Tet$. He then uses Table~\ref{tbl:typecexample} to decide which measurement result $j$ to correct for, and performs --- in his own frame --- the correction $ V_j$.

While this procedure only reduces reference frame uncertainty to the \emph{binary tetrahedral} subgroup of $\SU(2)$, it will be shown in Proposition~\ref{prop:perfecttel} that it results in perfect teleportation. As before, it possesses the (DR) and (ME) properties, but violates (MC) and  (NL).

\section{Theory}
\label{sec:theorysection}
We now explain the theory behind the examples in Section~\ref{sec:examples}.

\subsection{Equivariant unitary error bases}
We first recall the notion of a unitary error basis. 
\begin{definition}\label{def:ueb}
A \emph{unitary error basis} (UEB) for a $d$-dimensional Hilbert space $V$ is a basis of $d^2$ unitary matrices $\{U_i\}_{i \in I}$ in $B(V)$ (where $I = \{1, \dots, d^2\}$ is the index set) which is orthonormal under the Hilbert-Schmidt inner product: \begin{equation}
\braket {U_i} {U_j} := \frac{1}{d} \Tr(U_i^{\dagger}U_j) = \delta_{ij}
\end{equation}
\end{definition}
\begin{theorem}[{\cite[Theorem 1]{Werner2001}}]
\label{Wernertighttheorem}
A teleportation protocol satisfying the (ME) property corresponds to a choice of unitary error basis for $V$, along with any other unitary matrix~$X$.
\end{theorem}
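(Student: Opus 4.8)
The plan is to establish the correspondence in both directions, showing how each piece of protocol data is pinned down by the (ME) constraint together with the requirement of perfect teleportation. A protocol satisfying (ME) consists of a single maximally entangled resource $\ket{\eta} \in V \otimes V$ shared between Alice (half $A'$) and Bob (half $B$), a measurement performed by Alice on the input system $A$ together with $A'$, classical transmission of the outcome $i$, and a correction unitary $W_i$ applied by Bob to $B$, such that the induced channel is the identity $\sigma \mapsto \sigma$. I want to show that such protocols are exactly parametrised by a pair $(\{U_i\}, X)$ consisting of a UEB $\{U_i\}_{i \in I}$ in the sense of Definition~\ref{def:ueb} and one auxiliary unitary $X$.

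First I would reduce the measurement to a projective measurement in an orthonormal basis $\{\ket{\phi_i}\}$ of $V \otimes V$ whose elements are all maximally entangled. The argument is that each measurement branch must, after Bob's fixed correction, reproduce the input $\sigma$ exactly; since correction is a unitary channel, the (unnormalised) map $\sigma \mapsto (\text{Bob's state})$ is a unitary channel on each branch, and in particular invertible. A branch in which $\ket{\phi_i}$ has Schmidt rank below $d$ would collapse $A'$ onto a proper subspace and destroy information about $\sigma$, so each $\ket{\phi_i}$ must have full Schmidt rank, i.e.\ be maximally entangled; counting outcomes and normalising then forces the $d^2$ outcomes to be rank-one and equiprobable with probability $1/d^2$.

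Next I would use vectorisation to convert this basis of maximally entangled states into a UEB. Fixing the standard maximally entangled state $\ket{\eta_0} = \tfrac{1}{\sqrt d}\sum_k \ket k \ket k$, every maximally entangled vector can be written uniquely as $\ket{\phi_i} = (\mathbbm 1 \otimes U_i^T)\ket{\eta_0}$ for a unitary $U_i$, and a direct computation gives $\braket{\phi_i}{\phi_j} = \tfrac 1 d \Tr(U_i^\dagger U_j)$. Hence orthonormality of the measurement basis is \emph{equivalent} to the trace-orthonormality of Definition~\ref{def:ueb}, so $\{U_i\}$ is a UEB; writing the resource as $\ket{\eta} = (X \otimes \mathbbm 1)\ket{\eta_0}$ records the remaining freedom in the single auxiliary unitary $X$.

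The core computation then links the two halves: contracting $\bra{\phi_i}_{AA'}$ against $\ket\psi_A \otimes \ket{\eta}_{A'B}$ via the teleportation identity $(A\otimes\mathbbm 1)\ket{\eta_0} = (\mathbbm 1 \otimes A^T)\ket{\eta_0}$ yields Bob's unnormalised state $\tfrac 1 d X^T U_i^\dagger \ket\psi$, which has norm-squared $1/d^2$ (confirming equiprobability) and is restored to $\ket\psi$ by the unique correction, up to a global phase, $W_i = U_i \overline X$. Reading the computation backwards gives the converse: any UEB $\{U_i\}$ and unitary $X$ assemble into a protocol with resource $(X \otimes \mathbbm 1)\ket{\eta_0}$, measurement basis $\{(\mathbbm 1 \otimes U_i^T)\ket{\eta_0}\}$, and corrections $W_i = U_i \overline X$, which teleports perfectly. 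The main obstacle is the reduction in the second step: ruling out cleverer POVMs or lower-rank measurement vectors, i.e.\ proving that deterministic perfect recovery of an arbitrary $d$-dimensional state through a $d$-dimensional maximally entangled resource forces the measurement to be projective, rank-one, equiprobable and maximally entangled on every branch. I expect to settle this by a dimension-counting and information-flow argument, with the remaining work being careful bookkeeping of global phases and of the genuine freedom captured by $X$.
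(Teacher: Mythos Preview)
The paper does not actually prove this theorem; it is quoted verbatim from Werner~\cite{Werner2001}, and the paragraph following the statement merely records what the correspondence looks like (resource state, measurement basis, corrections) without argument. Your proposal is a correct outline of Werner's original proof: the vectorisation step, the teleportation identity computation, and the identification of the hard reduction --- that perfect deterministic recovery through a $d$-dimensional maximally entangled resource forces a rank-one projective measurement with $d^2$ equiprobable maximally entangled outcomes --- are all as in Werner's paper, and you correctly flag that reduction as the only substantive obstacle.

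The only discrepancy is cosmetic: you place the auxiliary unitary $X$ on Alice's $A'$ leg of the resource and keep the measurement basis $(\mathbbm 1\otimes U_i^T)\ket{\eta_0}$, whereas the paper places $X$ on Bob's leg and absorbs it into the measurement basis as $(\mathbbm 1\otimes (U_iX)^T)\ket{\eta_0}$ with correction $U_i$. These parametrisations are equivalent via $(X\otimes\mathbbm 1)\ket{\eta_0}=(\mathbbm 1\otimes X^T)\ket{\eta_0}$, so your corrections $W_i=U_i\overline X$ match the paper's $U_i$ after the obvious relabelling of $X$.
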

\noindent
Under this correspondence, the shared entangled state $\eta$ is the maximally entangled state $\sum_i \ket{i} \otimes X\ket{i}$ for a chosen orthonormal basis $\{\ket{0},\ket{1},\dots\}$ and some unitary $X$. (Any bipartite maximally--entangled pure state is of this form.) Alice measures in the maximally--entangled orthonormal basis $\{\ket{\phi_i}\}_{i\in I}$, where \begin{equation}\label{eq:phimeas} \ket{\phi_x} = \sum_i \ket{i} \otimes (U_x X)^{T} \ket{i}. \end{equation}
Bob's correction for measurement outcome $x$ is $U_x$. 

We now consider the effect of reference frame misalignment on such a procedure. Let $G$ be a compact Lie group of reference frame transformations, with unitary representation $\rho: G \to B(V)$ on Bob's system; here and throughout we assume uniform reference frame uncertainty, where the probability measure over $G$ is the Haar measure $\text dg$. We assume that the maximally entangled state $\ket{\eta} \in V \otimes V$ is invariant up to a phase under changes in frame, so that the entanglement is not itself degraded by reference frame uncertainty.\footnote{The existence of such states is treated in an appendix of our earlier work~\cite{Verdon2018}.} 
We work in Alice's frame. In this frame, Alice performs the measurement correctly and sends the result $i$, but Bob performs the correction $\rho(g)^{\dagger} U_i \rho(g)$.\footnote{For a proof, see Appendix~\ref{sec:refframetransfrulesproof}.} Since $g \in G$ is unknown, the effective channel when Alice measures $i$ is
\begin{equation}\label{generalnaivechanneleqn}
\sigma'_i = \int_G \text dg  \; [\rho(g)^{\dagger} U_i \rho(g) U_i^{\dagger}] (\sigma) .
\end{equation}
For finite $G$, we can use an \emph{equivariant} UEB together with a classical channel carrying a $G$-action to perform perfect reference frame--independent teleportation.\footnote{The existence of equivariant UEBs is treated in~\cite{Verdon2018}, with a complete classification for qubit systems.}
\begin{definition}
Let a finite group $H$ act on a Hilbert space $V$ of dimension $d$ by the representation $\rho: H \to B(V)$. We say that a unitary error basis $\{U_i\}_{i \in I}$ for $V$ is \textit{$H$-equivariant} when the right conjugation action of $H$ permutes the elements of $\{U_i\}_{i \in I}$ up to a phase. Explicitly,
$$
\rho(h)^{\dagger} U_i \rho(h) = \alpha(i,h) U_{\sigma(i,h)} \;\;\; \forall \, h \in H,i \in I
$$
where $\sigma: I \times H \to I$ is a right action of $H$ on the index set $I = \{0, \dots, d^2-1\}$, and $\alpha: I \times H \to \U(1)$ is some phase.
\end{definition}
\begin{proposition}[{\cite[Theorem 2.7]{Verdon2018}}]\label{prop:finitegroupseqtel}
Let $H$ be a finite group of reference frame transformations with an equivariant unitary error basis $\{U_i\}_{i \in I}$ and corresponding right action $\sigma: I \times H \to I$. Let Alice communicate the measurement results using a channel whose set of messages $I$ carries the inverse left action $\sigma^{-1}: H \times I \to I$. Then the teleportation protocol with data $\{U_i\}_{i \in I}$ will function perfectly for all $h_{AB} \in H$.
\end{proposition}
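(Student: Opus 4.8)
The plan is to work throughout in Alice's frame, exactly as in the discussion preceding the proposition, where the sole effect of a misalignment $g \in G$ is that Bob's nominal correction $U_j$ is implemented as $\rho(g)^{\dagger} U_j \rho(g)$ (the appendix result cited there). Since the shared state is maximally entangled and Alice performs her measurement correctly in her own frame, perfect teleportation for a fixed $h_{AB}$ will follow from Werner's theorem (Theorem~\ref{Wernertighttheorem}) as soon as I show that Bob's \emph{effective} correction, computed in Alice's frame, equals the intended $U_i$ up to a global phase, where $i$ is Alice's measurement outcome. So the entire content reduces to tracking one index through the classical channel and through the conjugation, and checking that they cancel.

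Next I would follow the measurement result $i$ step by step. Alice transmits $i$ on the classical channel; because the channel's message set carries the inverse left action, a misalignment $h_{AB}$ causes Bob to receive the message $j = \sigma^{-1}(h_{AB}, i)$. Writing the inverse left action explicitly in terms of the right action $\sigma$ via $\sigma^{-1}(h,i) := \sigma(i, h^{-1})$, this is $j = \sigma(i, h_{AB}^{-1})$. Bob then performs the correction $U_j$ in his frame, which in Alice's frame reads $\rho(h_{AB})^{\dagger} U_j \rho(h_{AB})$.

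The crux is then a one-line group-action computation. Applying $H$-equivariance with $h = h_{AB}$ to the index $j = \sigma(i, h_{AB}^{-1})$ gives
\begin{equation*}
\rho(h_{AB})^{\dagger} U_{\sigma(i, h_{AB}^{-1})} \rho(h_{AB}) = \alpha\bigl(\sigma(i, h_{AB}^{-1}), h_{AB}\bigr)\, U_{\sigma(\sigma(i, h_{AB}^{-1}),\, h_{AB})}.
\end{equation*}
Since $\sigma$ is a right action, the index collapses: $\sigma(\sigma(i, h_{AB}^{-1}), h_{AB}) = \sigma(i, h_{AB}^{-1} h_{AB}) = \sigma(i, e) = i$, so Bob's effective correction is a unit-modulus scalar times $U_i$. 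As this scalar is a global phase, it cancels in the conjugation $W \mapsto W \sigma W^{\dagger}$ that implements the correction, so Bob recovers exactly $\sigma$. This holds for every $h_{AB} \in H$, giving the claim.

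The step I expect to require the most care is not any calculation but fixing conventions so that the inverse left action on the channel is precisely the inverse of the right conjugation action on the UEB --- in particular the appearance of $h_{AB}^{-1}$ rather than $h_{AB}$ in the received message, which is exactly what makes the two actions cancel. I would also state explicitly that the phase $\alpha$ enters only as a global phase and is therefore physically irrelevant, and confirm that restricting attention to Alice's frame is legitimate here because the entangled resource was assumed frame-invariant up to phase, so the misalignment manifests solely through Bob's correction and through the channel as described.
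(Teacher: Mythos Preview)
Your proof is correct and follows essentially the same route as the paper's own one-line argument: work in Alice's frame, note that Bob receives $\sigma^{-1}(h,i)$ and hence performs $\rho(h)^{\dagger}U_{\sigma^{-1}(h,i)}\rho(h)$, then use equivariance and the right-action identity $\sigma(\sigma^{-1}(h,i),h)=i$ to recover $U_i$ up to phase. Your version is simply more explicit about the conventions (unpacking $\sigma^{-1}(h,i)=\sigma(i,h^{-1})$) and about why the phase $\alpha$ is harmless, which the paper leaves implicit in its ``$\sim$''.
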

\begin{proof}
In Alice's frame, for measurement result $i$ and any misalignment $h \in H$, Bob will perform the correction $
\rho(h)^{\dagger} U_{\sigma^{-1}(h,i)} \rho(h) \sim U_{\sigma(\sigma^{-1}(h,i),h)} = U_i$.
\end{proof}
\noindent Here we consider actions of general (i.e. possibly infinite) compact Lie groups $G$, for which equivariant UEBs generally do not exist. Our approach here is to identify a finite subgroup $H\subset G$ such that there exists an equivariant UEB for $H$ under the restricted representation. We then choose an encoding of the measurement result in the classical channel which carries the inverse action in the sense of Proposition~\ref{prop:finitegroupseqtel}, allowing us to  `quotient' the space of possible misalignments $G$ by the subgroup $H$.

\begin{remark}
If the representation of $G$ on the system to be teleported is not faithful, we can consider the natural faithful representation of the \emph{reduced reference frame transformation group} $\tilde{G}:=G/\Ker(\rho)$. In Section~\ref{sec:example2}, for instance, the reduced transformation group was $U(1)/{\mathbb{Z}_2} \cong U(1)$, because the representation~\eqref{eq:onlyspeakablecommunication} obeys $\rho(2 \theta) = \rho(\theta)$. For the faithful action, we can use the results about existence of equivariant UEBs from~\cite{Verdon2018}. We cannot simply assume that $G$ acts faithfully, though, since when constructing a compatible classical channel  it will be necessary to consider the physical rather than the reduced transformation group.
\end{remark}

\begin{example}
\begin{itemize}
\item The UEB in both the tight and perfect schemes for $\U(1)$ (Section~\ref{sec:example2}) is the set of Pauli matrices, which is equivariant for the subgroup $\mathbb{Z}_4<U(1)$ of the reduced transformation group. A generator of $\mathbb{Z}_4$ acts as the swap $(12)$ on the index set of the UEB under conjugation.
\item In the tight scheme for $\SU(2)$ (Section~\ref{sec:example}) the Pauli UEB is equivariant for the binary octahedral subgroup $\BOct \subset  \SU(2)$ preserving the cube.  
\item In the perfect scheme for $\SU(2)$ (Section \ref{sec:example}) the tetrahedral UEB is equivariant for the binary tetrahedral subgroup $\BTet \subset \SU(2)$ preserving the tetrahedron. \end{itemize}
\end{example}

\subsection{Compatible encoding of classical information}\label{sec:encodingschemes}

We now consider the other component of the scheme, a classical channel carrying an action of the reference frame transformation group. The spaces of readings of all the classical channels we consider in this work carry a smooth manifold structure with normalised measure $\text dx$, and all actions are smooth and measure-preserving.
 
\begin{definition}\label{def:unspeakablecomm}
We say that a classical channel \emph{communicates unspeakable information}~\cite{Peres2002}, or is an \emph{unspeakable channel}, if its space of readings $C$ carries a nontrivial action of the reference frame transformation group $G$.

We call a channel whose space of readings carries a trivial $G$-action a \emph{speakable channel}.
\end{definition}
\noindent 
Throughout this paper we make the simplifying assumption that there is no channel noise, apart from that arising from frame misalignment. A classical channel is therefore fully described by its space of readings and the $G$-action on that space; for this reason we conflate the channel with its space of readings, using the same letter $C$ for both.
Since we have chosen the convention that the effect of a change of reference frame on the states of a quantum system corresponds to a left action of the transformation group (see Appendix~\ref{sec:refframetransfrulesproof}), the action of $G$ on the classical channel will be a left action.
\begin{example}
\begin{itemize}
\item For the tight and perfect schemes in Section~\ref{sec:example2}, the space of readings was the linear polarisation direction of the light beam.  As a smooth manifold, this is the real projective line $\mathbb{RP}^1$; it carries a non-faithful smooth action of $U(1)$ with kernel $\mathbb{Z}_2$ (since a $\pi$ rotation does not change the polarisation direction).
\item For the tight scheme in Section~\ref{sec:example}, the space of readings was the space of possible orientations of a rod. As a manifold, this is the real projective plane $\mathbb{RP}^2$, carrying the obvious smooth action of $SU(2)$. 
\item For the perfect scheme in Section~\ref{sec:example}, the space of readings was the space of possible orientations of a completely asymmetric object. As a manifold, this is the Stiefel (frame) manifold $V_{2}(\mathbb{R}^3) \cong SO(3)$, carrying the obvious smooth action of $SU(2)$.
\end{itemize}
\end{example}
We now specify a framework for encoding of measurement values in such a channel.
\begin{definition}[Encoding scheme]\label{def:encodingscheme}
Let $C$ be an unspeakable channel and $I$ be a finite set of values to be sent through it. An \emph{encoding scheme} for $I$ is:
\begin{itemize}
\item A set of open subsets $\{E_i \subset C \; | \; i \in I\}$, the \emph{encoding subsets}, where $E_i$ are disjoint open sets.
\item A set of open subsets $\{D_i \subset C \; | \; i \in I\}$, the \emph{decoding subsets}, where $D_i$ are disjoint open sets which cover $C$ up to a set of measure zero.
\end{itemize}
The encoding subset $E_i$ is the set of all possible readings Alice can send in order to transmit the value $i \in I$. The decoding subset $D_i$ is the set of all possible readings upon receipt of which Bob will record the value $i \in I$.
\end{definition}
\noindent Recalling Proposition~\ref{prop:finitegroupseqtel}, the success of our protocol depends on encoding schemes which are compatible with the right action of $H$ on the index set of the UEB.
\begin{definition}[Compatible channel]\label{def:compatiblechannel}
Let $C$ be an unspeakable channel for a finite group $H$. Let $\sigma: I \times H \to I$ be a right action of $H$ on an index set $I$. We say that an encoding scheme for $I$ is \emph{compatible with $\sigma$} if:
\begin{itemize}
\item The decoding subsets $\{D_i\}_{i \in I}$ and the encoding subsets $\{E_i\}_{i \in I}$ are each permuted under the action of $H$ on $C$, inducing left actions $\tau_D, \tau_E: H \times I \to I$.
\item The left actions $\tau_D, \tau_E: H \times I \to I$ are equal and inverse to the action $\sigma: I \times H \to I$  of $H$ on $I$. That is, for all $i \in I$, $$\tau_D(i, -) = \tau_E(i, -) = \sigma^{-1}(i,-).$$
\end{itemize}
\end{definition}
\noindent In words: given a right action of a finite reference frame transformation group on the UEB index set, a compatible encoding scheme transmits the indices through the classical channel with the inverse left action.
\begin{example}
\begin{itemize}
\item In Section~\ref{sec:example2}, the encoding and decoding subsets for the tight scheme are the same, namely the regions $R_1$ and $R_2$ (Figure~\ref{fig:examplemeasregions}). In the physical (unfaithful) representation, the Pauli UEB is equivariant for the subgroup $\mathbb{Z}_8< U(1)$, where a generator of $\mathbb{Z}_8$ acts as the swap (12). Compatibly, the regions $R_1$ and $R_2$ are swapped under the action of a generator of $\mathbb{Z}_8$. For the perfect scheme, the encoding subsets are singletons, namely the polar angles $0$ and $\pi/4$; the decoding subsets are the regions $R_1$ and $R_2$.
\item In the tight scheme of Section~\ref{sec:example}, the encoding and decoding subsets are the same: $D_i = E_i$ is the subset of orientations of the rod through the $i$-faces of the cube.  The indices of the Pauli UEB are permuted inversely to the labels on the cube's faces under the conjugation action of $\BOct$. 
\item In the perfect scheme of Section~\ref{sec:example}, the encoding subsets $E_i$ are singletons, namely the orientations given by rotating  the object according to Table~\ref{tbl:typecexample}. The decoding subsets are Voronoi cells around these orientations~\cite{Yan2013}. The indices of the tetrahedral UEB are permuted inversely to the encoding and decoding subsets under the conjugation action of $\BTet$.
\end{itemize}
\end{example}

\subsubsection{A construction of compatible encoding schemes}
\label{sec:referenceframeencodings}
We now provide a general construction of a compatible encoding scheme for any transitive action $\sigma: I \times H \to I$ of a finite subgroup of $G$. Since all actions split into transitive actions on the orbits, this loses no generality, since we can  communicate the orbit index using speakable communication. For the construction, we need an unspeakable classical channel of the following type.  Recall that an action is \emph{free} if all stabilisers are trivial, and \emph{transitive} if it possesses only one orbit.
\begin{definition}
Let $G$ be the reference frame transformation group, with representation $\rho$ on the system to be teleported. Let $C$ be an unspeakable classical channel, carrying the action $\alpha: G \times C \to C$. We say $C$ is \emph{matched to $\rho$} if $\Ker(\rho) \subseteq \Ker(\alpha)$, and the reduced action $\tilde{G} \times C \to C$, where  $\tilde{G} = G/\Ker(\rho)$ is the reduced transformation group, is free and transitive. 
\end{definition}
\begin{example}
\label{ex:rfchannels}
\begin{itemize}
\item In Section~\ref{sec:example2} the kernel of the representation $\rho$ is $\mathbb{Z}_2$, generated by the rotation through an angle $\pi$. Likewise, the kernel of the action of $\U(1)$ on polarisation directions is $\U(2)$. The reduced group $G/\Ker(\rho)$ corresponds to the rotations $\theta \in (-\pi/2,\pi/2]$, which clearly act freely and transitively on the polarisation directions. 
\item The channel for the perfect scheme in Section~\ref{sec:example}, where a completely asymmetric classical object was transmitted, is a matched channel for the representation of $\SO(3)$. Here the kernel of $\rho$ is trivial, and the action of $\SO(3)$ on the set of orientations is clearly free and transitive.
\ignore{
\item Here is an example relevant for teleportation of a quantum clock~\cite{Chiribella2012}. Consider the group $U(1)$ of time translations of a quantum system; a frame corresponds to a choice of a `zero of time' $t_0$. Any fixed event perceived by both parties defines a matched channel, whose space of readings corresponds to the time of the event with respect to the period of evolution of the quantum system.}
\end{itemize}
\end{example}
\noindent

\ignore{
The way in which the reference frame system is shared is unimportant; it may be passed between Alice and Bob, or it may already be distributed, as in Example~\ref{ex:rfchanneltime}. They will use their shared reference frame system to communicate messages, using the well-known fact that any free and transitive left $G$-space is isomorphic to the group $G$ considered as a left $G$-space under left multiplication.\footnote{Manifolds on which $G$ acts freely and transitively are usually known as  \emph{principal homogeneous spaces}, or \emph{torsors}.} They associate each of the configurations of the system to an element of $G$ using a \emph{labelling}, which is a choice of isomorphism $l: C \to G$ depending on their local reference frame configurations. One then obtains an unspeakable classical channel whose messages correspond to elements of $G$; once Alice fixes a labelling, she communicates element $g \in G$ to Bob by preparing the system in the configuration associated to $g$ in her labelling. Bob will then interpret this configuration with respect to his own labelling to obtain some $\tilde{g} \in G$.

All $G$-labellings of a reference frame system $C$ are obtained by choosing an element $x_e \in C$ such that $l(x_e) = e$; they are then uniquely determined by the isomorphism condition $l (g\cdot x_e) = g \cdot l(x_e) = g$.  We assume that for a reference frame system there is a canonical procedure (agreed by both parties beforehand) to choose an element $x_e \in C$ based on one's own frame configuration $f \in \mathcal{F}$, where $\mathcal{F}$ is the space of reference frame configurations; abstractly, such a procedure corresponds to a map $\epsilon: \mathcal{F} \to C$ satisfying the naturality condition
\begin{equation}\label{eqn:naturalityforepsilonunspchannel}
g \cdot \epsilon(f) = \epsilon( g \cdot f).
\end{equation}}
The readings of a matched channel $C$ can be identified with elements of the reduced transformation group $\tilde{G}$, by choosing an `identity' reading $[e] \in C$ based on their own reference frame configuration. All other readings in $C$ are then identified uniquely by $[g]:= g \cdot [e]$, for any $g \in \tilde{G}$.
\begin{example}
\begin{itemize}
\item For the channel of Section~\ref{sec:example2}, the channel reads $[e]$ when the polarisation axis is the $x$-axis of the observer. 
\item For the perfect scheme of Section~\ref{sec:example}, the the channel reads $[e]$ when the frame defined by the asymmetric object is aligned with the Cartesian frame of the observer.  
\ignore{
\item For the time-of-event channel in Example~\ref{ex:rfchannels}, the channel reads $[e]$ when the event occurs at the observer's zero of time $t_0$ with respect to the period.}
\end{itemize}
\end{example}
In general, Alice and Bob will have different labellings of the channel, given that their reference frames are oriented differently. We write $[g]_A, [g]_B$ for the reading associated to $g \in G$ by Alice and Bob respectively. 
\ignore{
\begin{definition}
We call $l(x)\in G$ for $x \in C$ the \emph{label} of $x$, and write $[l(x)]$ to refer to $x$ when a labelling has been fixed; in other words, $[-]:G \to C$ is the inverse of $l$. Where Alice and Bob have different labellings $l_A$, $l_B$ --- as in general they will, given their different frame alignments --- we write $l_A(x)$, $l_B(x)$ for their respective labellings of $x$, and $[-]_A, [-]_B$ as the inverse operations, so that $[l_A(x)]_A = [l_B(x)]_B = x$.
\end{definition}
}
\ignore{
\begin{definition}
Let $R$ be the space of reference frame configurations and $C$ be a matched channel. A \emph{selection isomorphism} is an isomorphism of $G$-spaces $\epsilon: R \to C$ satisfying the naturality condition:
\begin{equation}\label{eqn:naturalityforepsilonunspchannel}
g \cdot \epsilon(f) = \epsilon( g \cdot f)
\end{equation}
The isomorphism $\epsilon$ is used by Alice and Bob to select a certain reading $x_e = \epsilon(f)$ corresponding to their own reference frame configuration $f$, by which they may $G$-label the space of readings as in Definition \ref{def:glabelling}.
\end{definition}
Since this definition may seem somewhat abstract, we present two examples.
\begin{example}
The channel in Section \ref{sec:example} was a universal unspeakable channel for the action of the faithful quotient $\U(1)/\Ker(\rho)$ of the reference frame transformation group. Indeed, the group $\U(1)/\Ker(\rho)$ is parametrised by  rotation angles $\phi \in [-\pi/2, \pi/2)$ of the Cartesian frame, where rotations through $-\pi/2$ and $\pi/2$ are identified. Because the polarisation axes are undirected, they are permuted freely and transitively by the action of $\U(1)/\Ker(\rho)$; every rotation through $\phi \in [-\pi/2, \pi/2)$ takes one polarisation axis onto a different axis, and any axis can be taken onto any other axis by such a rotation. Here the isomorphism $\epsilon:F \to C$ selects the observer's $x$-axis; this reading is then labelled with polar angle $[0]$, allowing the subsequent labelling of the other elements of the channel by angles $\phi \in [-\pi/2, \pi/2)$. In particular, note that a clockwise rotation of the Cartesian frame through an angle $\phi$ takes the labelling $[x]$ of a given axis to $[x + \phi]$. It is then easy to see that the labelling in Section \ref{sec:example} matches Definition~\ref{def:glabelling}.
\end{example}
\begin{example}
In Section \ref{sec:numerics} we consider the fundamental representation of $\SU(2)$ on a qubit. The most obvious place where this action arises is for a spin system, where the reference frame transformation group $\SU(2)$ corresponds to three-dimensional rotations through angles of up to $4\pi$ radians. We may consider this as a representation of $SO(3)$ up to an ignorable global phase, identifying rotations differing by an angle of $2\pi$. For $SO(3)$, a universal channel would be the transmission of a  classical Cartesian frame, whose orientations would form the space of readings. The isomorphism $\epsilon:F \to C$ would select the orientation of the frame which was the same as the observer's. Any completely rotationally asymmetric body with marked cardinal points can be considered as a `frame'.
\end{example}}
\begin{proposition}\label{prop:readingtransfunderrfchange}
If Bob's frame is related to Alice's by a transformation $g_{AB} \in G$, then their labellings are related as follows:
\begin{equation}\label{eq:readingtransfsunderrfchange}
[g]_A = [ g g_{AB}^{-1}]_B
\end{equation}
\end{proposition}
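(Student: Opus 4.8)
The plan is to reduce the claim to a single relation between Alice's and Bob's \emph{identity} readings and then chase the definitions, exploiting that a matched channel is a torsor for the reduced group $\tilde G$ (the reduced action being free and transitive). Write $[e]_A, [e]_B \in C$ for the readings that Alice and Bob respectively select as the identity, each determined from their own frame configuration. By construction Alice's labelling satisfies $[g]_A = g \cdot [e]_A$ and Bob's satisfies $[h]_B = h \cdot [e]_B$; since $\Ker(\rho) \subseteq \Ker(\alpha)$ these actions factor through $\tilde G$, so the labels live in $\tilde G$ and the whole statement is controlled by how $[e]_A$ and $[e]_B$ are related.

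First I would establish that the selection of the identity reading is equivariant for the $G$-action: transforming a frame by $g$ transforms the reading selected from it by $g$. This is precisely the left-action transformation rule for readings of an unspeakable channel, which I would import from the reference-frame transformation rules of Appendix~\ref{sec:refframetransfrulesproof}. Applied to the hypothesis that Bob's frame is obtained from Alice's by $g_{AB}$, this yields the key relation $[e]_B = g_{AB} \cdot [e]_A$.

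With this relation the computation is immediate. Bob's label $h$ for a reading $x$ satisfies $x = h \cdot [e]_B = (h g_{AB}) \cdot [e]_A$, while Alice's label $g$ for the same reading satisfies $x = g \cdot [e]_A$. Because the reduced action is free, $[e]_A$ has trivial stabiliser, so $g \cdot [e]_A = (h g_{AB}) \cdot [e]_A$ forces $g = h g_{AB}$ in $\tilde G$, whence $h = g g_{AB}^{-1}$. Thus the reading Alice calls $[g]_A$ is the reading Bob calls $[g g_{AB}^{-1}]_B$, which is exactly~\eqref{eq:readingtransfsunderrfchange}.

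I expect the only genuine obstacle to be bookkeeping of conventions: pinning down whether the hypothesis means $[e]_B = g_{AB} \cdot [e]_A$ or the reverse, and verifying that this handedness matches the left-action convention already fixed for quantum states, so that the induced action on the classical channel carries the orientation needed for the later analysis of the effective channel. Once that direction is confirmed against Appendix~\ref{sec:refframetransfrulesproof}, the remainder is the one-line torsor calculation above.
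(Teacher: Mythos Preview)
Your proposal is correct and follows essentially the same route as the paper's (suppressed) proof: both hinge on the equivariance of the identity-reading selection, i.e.\ the naturality condition $g\cdot\epsilon(f)=\epsilon(g\cdot f)$, which yields $[e]_B=g_{AB}\cdot[e]_A$ and then the claimed relation by chasing the definition $[g]=g\cdot[e]$. The only cosmetic difference is that the paper computes $[g]_A=g\cdot\epsilon(f_A)=\epsilon(g\cdot g_{AB}^{-1}\cdot f_B)=(gg_{AB}^{-1})\cdot[e]_B=[gg_{AB}^{-1}]_B$ directly, without introducing Bob's unknown label $h$ and then cancelling via freeness; your detour through freeness is harmless but unnecessary, since the labelling is already \emph{defined} by $[h]_B=h\cdot[e]_B$.
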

\ignore{
\begin{proof}
The $G$ labelling of the channel is defined as $[g]_A = g \cdot (x_e)_A$; we have $(x_e)_A = \epsilon(f_A)$, so $[g]_A = g \cdot \epsilon(f_A) = \epsilon(g \cdot f_A) = \epsilon( g \cdot g_{AB}^{-1} \cdot f_B) =(g g_{AB}^{-1}) \cdot (x_e)_B = [g g_{AB}^{-1}]_B$. \ignore{
To aid in understanding, the relation between the labellings of two parties with misaligned reference frames is shown in Figure \ref{fig:readingslabelling}.\begin{figure}
\centering
    \scalebox{.8}{\input{readingslabellingfigure.tikz}}
\caption{These figures show the different labellings Alice and Bob assign to four readings in the channel $F$, based on their own reference frame configurations $[e]_A$ and $[e]_B$. The arrows labelled by group elements indicate that one underlying unlabelled reading is taken to another by that reference frame transformation.}
\label{fig:readingslabelling}
\end{figure}}
\end{proof}
\noindent We have seen how labelling of a shared reference frame system allows us to construct an unspeakable channel whose set of messages is the set of elements of $G$, and which carries the action~\eqref{eq:readingtransfsunderrfchange}. We call this a \emph{reference frame channel}.
As an example of the utility of such a channel, we give a procedure whereby it can be used to transfer full reference frame information in a single shot.
\begin{procedure}[Reference frame information transfer]\label{proc:fullprioralignment}
Alice arranges with Bob to send the reading $[e]_A$ which is the identity in her labelling. Bob receives it and sees that it is labelled $[g_{AB}^{-1}]_B$ in his own frame. He thus learns that the reference frame transformation taking Alice's frame onto his own is $g_{AB}$.
\end{procedure}
}

We now construct the compatible encoding scheme. We recall the following characterisation of transitive actions.
\begin{lemma}\label{lem:cosetspaces}
Let $H$ be a finite group. Any transitive right $H$-set is isomorphic to a right coset space $L \backslash H$ for a subgroup $L \subset  H$ under the right action $(Lh_2)\cdot h_1 = Lh_2h_1$. 
\end{lemma}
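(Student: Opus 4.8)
The plan is to prove this by the standard orbit--stabiliser argument, exhibiting the required subgroup $L$ as a point stabiliser. Let $X$ be a transitive right $H$-set. First I would fix any base point $x_0 \in X$ and set $L := \{h \in H \mid x_0 \cdot h = x_0\}$, the stabiliser of $x_0$; a one-line check shows $L$ contains $e$ and is closed under multiplication and inversion, hence is a subgroup of $H$.

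Next I would define the candidate isomorphism $\phi : L\backslash H \to X$ by $\phi(Lh) = x_0 \cdot h$, and the substance of the proof is verifying that $\phi$ is a well-defined, equivariant bijection. For well-definedness, if $Lh = Lh'$ then $h' = lh$ for some $l \in L$, so that $x_0 \cdot h' = (x_0 \cdot l)\cdot h = x_0 \cdot h$, using that the action is a right action and that $l$ stabilises $x_0$. Surjectivity is exactly the transitivity hypothesis: every point of $X$ is of the form $x_0 \cdot h$ for some $h \in H$. For injectivity, if $x_0 \cdot h = x_0 \cdot h'$ then acting on the right by $h^{-1}$ gives $x_0 \cdot (h'h^{-1}) = x_0$, so $h'h^{-1} \in L$ and hence $Lh = Lh'$. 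Finally, equivariance follows immediately from associativity of the action, since $\phi((Lh)\cdot h_1) = \phi(Lhh_1) = x_0\cdot(hh_1) = (x_0 \cdot h)\cdot h_1 = \phi(Lh)\cdot h_1$.

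I expect no serious obstacle here, as this is essentially the classification of transitive actions. The one point requiring care is bookkeeping with the conventions: since $H$ acts on the right and $L\backslash H$ consists of right cosets, one must check that left multiplication by stabiliser elements --- which is what relates two representatives of the same right coset --- interacts correctly with the right action. This is precisely what makes well-definedness and injectivity go through, and it is the only place where the fact that $L$ is a subgroup, rather than an arbitrary subset, is actually used; everything else is formal.
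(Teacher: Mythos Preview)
Your proof is correct and is precisely the standard orbit--stabiliser argument one would expect here. The paper in fact states this lemma without proof, treating it as a well-known characterisation of transitive actions, so your write-up is more detailed than the paper's own treatment; there is nothing to compare beyond that.
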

\noindent 
Our construction divides the matched channel $C$ up into regions $\{R_h \subset C \;|\; h \in H\}$, which are permuted by reference frame transformations in $H$ according to the inverse left action $h_2 \cdot R_{h_1} = R_{h_1 h_2^{-1}}$. We then identify these regions to obtain the desired transitive action. To define the $R_h$, we choose a fundamental domain for the finite subgroup $H \subset \tilde{G}$.
\begin{definition}
A \emph{fundamental domain} for a finite subgroup $H \subset G$ is an open subset $F \subset G$ containing the identity such that the $H$-translates $Fh$ have empty intersection and cover $G$ up to a set of measure zero. \footnote{It is good to pick $F$ so that all the readings in it are as close to the identity as possible under some metric. To make this precise one can use \emph{Voronoi cells}~\cite{Yan2013}.}
\end{definition}
\begin{example}
In the example of Section~\ref{sec:example2}, the rotations through an angle $\theta \in (-\pi/8,\pi/8)$ are a fundamental domain for $\mathbb{Z}_4 \subset \tilde{G}$.
\end{example}
\begin{definition}\label{def:blockdivision}
Fix a subgroup $H\subset G$, and a fundamental domain $F$ for $H$ in $G$. Then the regions $\{R_h \;|\; h \in H\}$ are defined as
$$R_h := \{[fh]\, |\, f \in F\}.$$
\end{definition}
\begin{lemma}\label{lem:blocktransfunderrfchange}
Let Bob's reference frame configuration be related to Alice's by a transformation $h_{AB} \in H$. Then 
\begin{equation*}
(R_h)_A = (R_{h h_{AB}^{-1}})_B.
\end{equation*}
\end{lemma}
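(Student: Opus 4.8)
The plan is to unfold both sides in terms of the labelling maps and then apply Proposition~\ref{prop:readingtransfunderrfchange} reading-by-reading. By Definition~\ref{def:blockdivision}, written in Alice's labelling, the left-hand region is $(R_h)_A = \{[fh]_A \mid f \in F\}$, a subset of the reading space $C$. Each element listed here is a single physical reading of the channel, which I want to re-express in Bob's labelling so as to compare with the right-hand side.

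First I would apply Proposition~\ref{prop:readingtransfunderrfchange} with group element $g = fh$ and frame change $g_{AB} = h_{AB}$ (legitimate since $h_{AB} \in H \subseteq G$), which gives the identity of readings $[fh]_A = [\,fh\,h_{AB}^{-1}\,]_B$. Substituting this into the set above yields $(R_h)_A = \{[\,f(hh_{AB}^{-1})\,]_B \mid f \in F\}$, where I have used associativity to regroup $(fh)h_{AB}^{-1} = f(hh_{AB}^{-1})$.

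Next I would note that $hh_{AB}^{-1}$ again lies in $H$, since $H$ is a group and both $h,h_{AB} \in H$; hence the index is admissible and $R_{hh_{AB}^{-1}}$ is genuinely one of the defined regions. Comparing with Definition~\ref{def:blockdivision} applied in Bob's labelling with index $hh_{AB}^{-1}$, namely $(R_{hh_{AB}^{-1}})_B = \{[\,f(hh_{AB}^{-1})\,]_B \mid f \in F\}$, the two sets coincide term-by-term, which is the claim. No separate reverse inclusion is needed: because Proposition~\ref{prop:readingtransfunderrfchange} is an equality of readings, the map $f \mapsto [\,f(hh_{AB}^{-1})\,]_B$ exhibits the same elements on both sides.

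There is no real obstacle here; the only thing to be careful about is the bookkeeping of left versus right multiplication and of whose labelling each reading is expressed in. The regions are built by \emph{right}-translating the fundamental domain (the $fh$ in Definition~\ref{def:blockdivision}), whereas a reference-frame change acts by \emph{right}-multiplication by $h_{AB}^{-1}$ on the label via Proposition~\ref{prop:readingtransfunderrfchange}, and it is precisely associativity $(fh)h_{AB}^{-1} = f(hh_{AB}^{-1})$ that turns the frame change into a relabelling of the region index. I would also remark that the fundamental-domain property of $F$ plays no role in this particular statement (it is used elsewhere to guarantee that the $R_h$ tile $C$); all that is invoked here is the defining formula for $R_h$ and the transformation rule for labellings.
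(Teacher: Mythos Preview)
Your proof is correct and follows exactly the approach of the paper, which simply states the result is immediate from Proposition~\ref{prop:readingtransfunderrfchange}; your argument is the natural unfolding of that one-line justification. Your additional observations (that $hh_{AB}^{-1}\in H$ so the index is admissible, and that the fundamental-domain property of $F$ is not used here) are accurate and harmless.
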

\begin{proof}
Immediate from~\eqref{eq:readingtransfsunderrfchange}.
\end{proof}
\ignore{
\begin{proof}
By \eqref{eq:readingtransfsunderrfchange} the labelling on the readings transforms as $[g]_A = [g g_{AB}^{-1}]_B$; so 
\begin{align*}
[R_h]_A = \{[fh]_A\, |\, f \in F\}
        = \{[fh h_{AB}^{-1}]_B\, |\, f \in F\}
        = [R_{h h_{AB}^{-1}}]_B.
\end{align*}
This completes the proof.
\end{proof}}
\ignore{
\noindent
These results are illustrated in Figure \ref{fig:blockslabelling}.
\begin{figure}[h]
\centering
\begin{tikzpicture}
        \begin{pgfonlayer}{nodelayer}
                \node [style=none] (0) at (3, 6) {};
                \node [style=none] (1) at (3, -0) {};
                \node [style=none] (2) at (6, 3) {};
                \node [style=none] (3) at (0, 3) {};
                \node [style=none] (4) at (1.75, 2.5) {$[F]_A$};
                \node [style=none] (5) at (1.75, 3.5) {};
                \node [style=none] (6) at (0.5, 2.5) {};
                \node [style=none] (7) at (1.75, 1.25) {};
                \node [style=none] (8) at (3.25, 2.75) {};
                \node [style=none] (9) at (1, 3.5) {};
                \node [style=none] (10) at (1.25, 4.75) {};
                \node [style=none] (11) at (2.25, 5.5) {};
                \node [style=none] (12) at (3.25, 5.25) {};
                \node [style=none] (13) at (4, 5) {};
                \node [style=none] (14) at (2.75, 3.25) {};
                \node [style=none] (15) at (2.5, 4.5) {$[Fh_1]_A$};
                \node [style=none] (16) at (4.75, 4.25) {};
                \node [style=none] (17) at (5.25, 3.25) {};
                \node [style=none] (18) at (5.25, 2.5) {};
                \node [style=none] (19) at (4.5, 1.5) {};
                \node [style=none] (20) at (2, 0.5) {};
                \node [style=none] (21) at (3, 0.25) {};
                \node [style=none] (22) at (4, 0.5) {};
                \node [style=none] (23) at (4.5, 1) {};
                \node [style=none] (24) at (4.5, 3.25) {$[Fh_2]_A$};
                \node [style=none] (25) at (3.25, 1) {$[Fh_1h_2]_A$};
                \node [style=none] (26) at (3.5, 6) {};
                \node [style=none] (27) at (5, 5.25) {};
                \node [style=none] (28) at (6, 3.75) {};
                \node [style=none] (29) at (6, 2.25) {};
                \node [style=none] (30) at (5, 0.75) {};
                \node [style=none] (31) at (4.25, 0.25) {};
                \node [style=none] (32) at (2, 0.25) {};
                \node [style=none] (33) at (0.75, 1) {};
                \node [style=none] (34) at (0, 3.5) {};
                \node [style=none] (35) at (0.75, 5) {};
                \node [style=none] (36) at (1.75, 5.75) {};
                \node [style=none] (37) at (1.75, 5.25) {};
                \node [style=none] (38) at (3.75, 3.25) {};
                \node [style=none] (39) at (2.25, 2.75) {};
                \node [style=none] (40) at (2.75, 4) {};
                \node [style=none] (41) at (3.25, 1.5) {};
                \node [style=none] (42) at (4, 5.5) {};
                \node [style=none] (43) at (4, 1) {};
        \end{pgfonlayer}
        \begin{pgfonlayer}{edgelayer}
                \draw [style=simple, bend right=45, looseness=1.00] (0.center) to (3.center);
                \draw [style=simple, bend right=45, looseness=1.00] (2.center) to (0.center);
                \draw [style=simple, bend right=45, looseness=1.00] (3.center) to (1.center);
                \draw [style=simple, bend right=45, looseness=1.00] (1.center) to (2.center);
                \draw [dotted, style=simple, bend right, looseness=1.00] (6.center) to (7.center);
                \draw [dotted, style=simple, bend left=60, looseness=1.25] (6.center) to (5.center);
                \draw [dotted, style=simple, bend left, looseness=1.25] (5.center) to (8.center);
                \draw [dotted, style=simple, bend left=15, looseness=0.75] (8.center) to (7.center);
                \draw [dotted, style=simple, bend left=15, looseness=1.00] (9.center) to (10.center);
                \draw [dotted, style=simple, bend left, looseness=1.00] (10.center) to (11.center);
                \draw [dotted, style=simple, bend left, looseness=1.00] (11.center) to (12.center);
                \draw [dotted, style=simple, bend right=15, looseness=1.50] (14.center) to (13.center);
                \draw [dotted, style=simple, bend right, looseness=1.00] (13.center) to (12.center);
                \draw [style=simple, dotted, in=-165, out=-34, looseness=1.00] (8.center) to (19.center);
                \draw [style=simple, dotted, bend right, looseness=1.00] (19.center) to (18.center);
                \draw [style=simple, dotted, bend right, looseness=1.00] (18.center) to (17.center);
                \draw [style=simple, dotted, bend right, looseness=1.00] (17.center) to (16.center);
                \draw [style=simple, dotted, bend right, looseness=1.25] (16.center) to (13.center);
                \draw [style=simple, dotted, bend right=15, looseness=1.25] (7.center) to (20.center);
                \draw [style=simple, dotted, bend right=15, looseness=1.25] (20.center) to (21.center);
                \draw [style=simple, dotted, in=-143, out=15, looseness=1.25] (21.center) to (22.center);
                \draw [style=simple, dotted, in=-63, out=117, looseness=1.00] (23.center) to (19.center);
                \draw [style=simple, dotted, bend left, looseness=1.25] (23.center) to (22.center);
                \draw [style=simple, dotted, bend left=45, looseness=0.75] (10.center) to (35.center);
                \draw [style=simple, dotted, bend right=15, looseness=1.50] (37.center) to (36.center);
                \draw [style=simple, dotted, bend left=15, looseness=1.25] (12.center) to (26.center);
                \draw [style=simple, dotted, bend left=15, looseness=1.00] (13.center) to (27.center);
                \draw [style=simple, dotted, bend right=15, looseness=0.75] (16.center) to (28.center);
                \draw [style=simple, dotted, bend right, looseness=0.75] (18.center) to (29.center);
                \draw [style=simple, dotted, bend right=15, looseness=1.00] (23.center) to (30.center);
                \draw [style=simple, dotted, bend right, looseness=0.75] (22.center) to (31.center);
                \draw [style=simple, dotted, bend right, looseness=0.25] (20.center) to (32.center);
                \draw [style=simple, dotted, bend left=15, looseness=1.00] (7.center) to (33.center);
                \draw [style=simple, dotted, bend right=15, looseness=1.00] (9.center) to (34.center);
                \draw [blue, ->] (38.center) to (39.center);
                \draw [blue, ->] (41.center) to (40.center);
        \end{pgfonlayer}
\end{tikzpicture}
\caption{This figure shows Alice's division of the channel into blocks corresponding to elements of $H$, as in Definition \ref{def:blockdivision}. The blue arrows here show how the blocks are permuted by a change in labelling from a reference frame shift $h_2 \in H$, as in Lemma \ref{lem:blocktransfunderrfchange}.}
\label{fig:blockslabelling}
\end{figure}
As desired, we now have a channel $C$ with $(F,H)$ block division whose blocks are permuted with the following left action of $H$ under a change in reference frame $h_{AB} \in H$:
\begin{equation}\label{eq:hactiononselffreetrans}
h_{AB} \cdot h  = h h_{AB}^{-1}
\end{equation}}
We can now construct a compatible encoding scheme for the transitive action $L \backslash H$ by grouping regions $R_h$ into cosets. Let $c_i \in H$ be right coset representatives for $L$ in $H$.
\begin{definition}\label{def:rfencodingschemes}
The \emph{tight matched scheme} for $\sigma$ is defined as:
\begin{align*}
D_i = \bigsqcup_{l\in L} R_{lc_i} && E_i = D_i
\end{align*}
The \emph{perfect matched scheme} is defined as:
\begin{align*}
D_i &= \bigsqcup_{l\in L} R_{lc_i} &&
E_i = \{ \bigsqcup_{l\in L} [lc_i] \}
\end{align*}
\end{definition}
\noindent
The reason for the nomenclature will become apparent in the next section.
\ignore{
\noindent It is clear that these subsets are disjoint and open and that the $\{[D^k_i]\}_{i \in I_k}$ cover the channel up to a set of measure zero; this is therefore an encoding scheme. We now show compatibility.
\begin{proposition}\label{prop:(f,h,k)action}
The encoding schemes of Definition~\ref{def:refframeencodingschemeprescription} are compatible with $(I_k, \sigma_k)$. That is, the subsets $(D^k_i,E^k_i)_{i \in I_k}$ carry the following action of $H$ under reference frame changes $h_{AB} \in H$:
\begin{align*}
[D^k_i]_A
= [D^k_{\sigma^{-1}(h_{AB},i)}]_B && [E^k_i]_A
= [E^k_{\sigma^{-1}(h_{AB},i)}]_B
\end{align*}
\end{proposition}
\begin{proof}
We have $[D_i]_A =  \sqcup_{k\in K} [F(kc_i)]_A =  \sqcup_{k\in K} [F(kc_ih_{AB}^{-1})]_B =  \sqcup_{k\in K}  [F(k\alpha(\sigma^{-1}(h_{AB}, i))]_B  = \sqcup_{k\in K}  [F(kc_{\sigma^{-1}(h_{AB}, i)}]_B = [D_{\sigma^{-1}(h_{AB},i)}]_B$. The result for $[E_i^k]$ in the perfect encoding follows similarly.
\end{proof}
}
\subsection{Teleportation schemes}\label{sec:procedures}

We now specify and prove correctness for our teleportation schemes. Throughout this section, let $H\subset G$ be a finite subgroup, let $\{U_i\}_{i \in I}$ be an equivariant UEB for $H$, let $\sigma: I \times H \to I$ be the corresponding right action of $H$ on the index set of the UEB, let $I_k \subset I$ be the orbits in $I$ under $\sigma$, where $k$ is some index for the orbits, and let $\sigma_k: I_k \times H \to I_k$ be the corresponding (transitive) restricted actions. 

\ignore{ 
Our approach depends on the existence of a compatible encoding scheme for the action $\sigma: I \times H \to I$ induced by the right conjugation action of the finite subgroup $H \subset G$ on an equivariant UEB with index set $I$. In Section~\ref{sec:referenceframeencodings}, we will show, using an unspeakable classical channel corresponding to a shared reference frame system, a compatible encoding scheme for any \emph{transitive} right action $\sigma$ of $H$ may be constructed. 

In order that our procedures can be applied to nontransitive actions, we use the orbit splitting of $I$ under the $H$-action. Alice will first communicate, through a speakable channel, the orbit $O \subset I$ of her measurement result; she will then then communicate the measurement index $i \in O$ using an unspeakable classical channel with the set of messages $O$, compatible with the restricted action $\sigma|_{O}:O \times H \to O$, which is transitive and therefore amenable to our construction in Section~\ref{sec:referenceframeencodings}. We will see that this does not affect any of the desirable properties of the teleportation schemes. Of course,  if one can find an equivariant UEB with a single orbit under the $H$-action~\cite{Verdon2018}, such as the tetrahedral UEB for $\BTet < \SU(2)$ in Section~\ref{sec:example}, or combine different orbits in a single physical channel, as in Section~\ref{sec:example2}, this prior speakable communication of the orbit label is unnecessary.
}

\subsubsection{Tight scheme}

\begin{procedure}[Tight teleportation scheme]\label{proc:meastransferproc}
Let $C$ be an unspeakable channel for $G$ (and therefore also for $H$), and let $(D^k_i,E^k_i)_{i \in I}$ be encoding schemes for $I_k$ on $C$ compatible with $\sigma_k: I_k \times H \to I_k$ and such that, for each $k$, the decoding regions are the same as the encoding regions, that is, $D_i^k = E_i^k$ for all $i,k$.

Alice measures in the basis $\{\ket{\phi_i}\}_{i \in I}$ \eqref{eq:phimeas} as in a standard teleportation protocol, and obtains the result $i \in I_k$. The result is transmitted as follows.
\begin{enumerate}
\item Alice transmits the orbit label $k$ through a speakable channel.
\item Alice sends a reading $x$ chosen uniformly at random from the region $E_i^k$.
\item Bob receives $g \cdot x \in D_j^k$ and performs the correction $U_j$.
\end{enumerate}
Here $g$ is the reference frame transformation taking Alice's frame at the time of measurement onto Bob's frame at the time of receipt.
\end{procedure}
\noindent
We now derive an explicit expression for the effective channel obtained using Procedure~\ref{proc:meastransferproc}. Recall that, for operators $M, \sigma \in B(H)$, we write $[M](\sigma)$ for $M \sigma M^{\dagger}$.
\begin{theorem}[Effective channel for Procedure~\ref{proc:meastransferproc}]\label{thm:maintheorem}
Suppose that Alice measures some result $i \in I_k$, where $D_k^i = E_k^i$ for all $i \in I_k$. Then the channel induced by Procedure~\ref{proc:meastransferproc} is:
\begin{equation}\label{eq:maintheorem}
\mathcal{T}_k(\sigma)= \frac{|I_k|}{\mu_C(E_0^k)} [\rho(c_i)] \circ \int\limits_G \left( \emph{d}g\, p(g)\, [\rho(g)^{\dagger} U_0  \rho(g)U_0^{\dagger}]  \circ [\rho(c_i)^{\dagger}] \;(\sigma) \right)
\end{equation}
Here $0 \in I_k$ is some fixed element of the orbit;  the normalising factor $\mu_C(E_0^k)$ is the measure of $E_0^k$ in $C$; $p(g) = \int_{E_0^k \subset C} \emph dx \, \mathbbm{1}_{D_0^k}(g \cdot x)$, where $\mathbbm{1}_{D_0^k}$ is a continuous approximation to the indicator function for $D_0^k \subset C$; and $\{c_i\}_{i \in I_k}$, $c_i \in H$ are such that $c_i \cdot E^k_0 =E^k_i$.
\end{theorem}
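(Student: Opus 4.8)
The plan is to compute the channel conditioned on a fixed measurement outcome $i\in I_k$ directly from the mechanics of the protocol, and then to collapse the resulting sum over Bob's decoded index into a single representative term by playing the compatibility of the encoding scheme off against the equivariance of the UEB. First I would write down the per-instance operation. Working in Alice's frame, if Alice measures $i$, transmits a reading $x\in E_i^k$, and the misalignment $g$ causes Bob to receive $g\cdot x\in D_j^k$ and hence apply the correction $U_j$ (appearing as $\rho(g)^\dagger U_j\rho(g)$ in Alice's frame), then the effect on the input is $[\rho(g)^\dagger U_j\rho(g)U_i^\dagger](\sigma)$; this is the outcome-$i$, correction-$j$ generalisation of the conventional expression~\eqref{generalnaivechanneleqn}, following from the same reference-frame rules (Appendix~\ref{sec:refframetransfrulesproof}). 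Averaging over Haar-random $g$ and over the uniform reading $x\in E_i^k$, and summing over $j$ (selected by the indicator $\mathbbm 1_{D_j^k}(g\cdot x)$, which makes sense because the regions $D_j^k=\bigsqcup_{l\in L}R_{lc_j}$ partition $C$ up to measure zero, as $lc_j$ exhausts $H$), gives
\begin{equation*}
\mathcal T^{(i)}(\sigma)=\frac{1}{\mu_C(E_i^k)}\int_G\d g\int_{E_i^k}\d x\sum_{j\in I_k}\mathbbm 1_{D_j^k}(g\cdot x)\,[\rho(g)^\dagger U_j\rho(g)U_i^\dagger](\sigma).
\end{equation*}
Since the orbit label $k$ travels on a speakable channel it is received correctly, so Bob always decodes within $I_k$; trace-preservation (each summand is a unitary conjugation and $\sum_j\mathbbm 1_{D_j^k}=1$ a.e.) is a useful check.

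Next come the changes of variable. Using the hypothesis $c_i\cdot E_0^k=E_i^k$ I would substitute the reading $x=c_i\cdot x'$ and then the group element $\tilde g=gc_i$; since the channel action is measure-preserving and Haar measure on the compact group $G$ is bi-invariant, both substitutions are free, and the equivariance identity $U_i^\dagger=\bar\gamma_i\,\rho(c_i)U_0^\dagger\rho(c_i)^\dagger$ lets me factor the integrand as $\rho(c_i)[\cdots]\rho(c_i)^\dagger$, the phase being absorbed by the conjugation $[\cdot]$. This already exposes the $[\rho(c_i)]\circ(-)\circ[\rho(c_i)^\dagger]$ structure of the claimed formula. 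The heart of the argument is the residual sum over $j$: compatibility (Definition~\ref{def:compatiblechannel}) forces the decoding regions to be permuted by the \emph{inverse} action, which together with $c_j\cdot E_0^k=E_j^k$ gives $\sigma(j,c_j)=0$ and hence, by UEB equivariance, $U_j=\gamma_j\,\rho(c_j)U_0\rho(c_j)^\dagger$. Substituting $s=c_j^{-1}\tilde g$ in the $j$-th summand then \emph{simultaneously} turns $\mathbbm 1_{D_j^k}(\tilde g\cdot x')=\mathbbm 1_{c_j\cdot D_0^k}(\tilde g\cdot x')$ into $\mathbbm 1_{D_0^k}(s\cdot x')$ and turns $\rho(\tilde g)^\dagger U_j\rho(\tilde g)$ into $\gamma_j\,\rho(s)^\dagger U_0\rho(s)$, so every summand collapses to the same $j$-independent integral $\int_G\d s\int_{E_0^k}\d x'\,\mathbbm 1_{D_0^k}(s\cdot x')\,[\rho(s)^\dagger U_0\rho(s)U_0^\dagger]$. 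Summing over the $|I_k|$ indices produces the factor $|I_k|$, and recognising $p(g)=\int_{E_0^k}\d x\,\mathbbm 1_{D_0^k}(g\cdot x)$ reproduces~\eqref{eq:maintheorem}, the normalisation $1/\mu_C(E_i^k)=1/\mu_C(E_0^k)$ combining with $|I_k|$ to give the stated prefactor.

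The hard part is precisely that simultaneous collapse: it works only because the channel's decoding regions and the UEB are permuted by mutually inverse actions of $H$, so that the single substitution $s=c_j^{-1}\tilde g$ neutralises both the region indicator and the conjugated error operator at once. The delicate bookkeeping is therefore getting the direction of every action right — the $\sigma$ versus $\sigma^{-1}$ in compatibility, the left-multiplication convention for $h\cdot D_i$, and the resulting identity $\sigma(j,c_j)=0$ that fixes which conjugate of $U_0$ equals $U_j$. A sign or inversion slip there would leave residual $j$-dependence and destroy the cancellation; by contrast, measure-preservation of the action, bi-invariance of the Haar measure, and the vanishing of the phases $\gamma_j,\gamma_i$ under $[\cdot]$ are all routine.
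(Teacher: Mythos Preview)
Your proof is correct and follows essentially the same route as the paper's (Appendix~\ref{app:effchanexpproof}): write the outcome-$i$ channel as an integral over $G\times C$ with indicator functions for the decoding regions, reduce to the base index $0$ via the substitution $(g,x)\mapsto(gc_i^{-1},c_i\cdot x)$ together with equivariance of the UEB, and then collapse the sum over decoded indices $j$ via $g\mapsto c_j^{-1}g$, using that compatibility makes the same left-translation simultaneously normalise the region indicator and the conjugated error. The only difference is organisational --- the paper first isolates $\mathcal T_i^k=[\rho(c_i)]\circ\mathcal T_0^k\circ[\rho(c_i)^\dagger]$ as a separate lemma and then collapses the $j$-sum inside $\mathcal T_0^k$, whereas you interleave the two substitutions --- and your explicit bookkeeping of $\sigma$ versus $\sigma^{-1}$ (deriving $\sigma(j,c_j)=0$ from $c_j\cdot E_0^k=E_j^k$) is if anything more careful than the paper's, which silently drops the equivariance phase.
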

\begin{proof}
The proof is somewhat technical, so has been placed in Appendix~\ref{app:effchanexpproof}.
\end{proof}
\begin{proposition}\label{prop:meastransferprocproperties} Procedure~\ref{proc:meastransferproc} satisfies (MC), (NL), (ME) and (DR).
\end{proposition}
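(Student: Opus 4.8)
The plan is to dispatch the four properties separately: (ME), (MC) and (DR) are structural observations about the procedure, while (NL) is the one genuinely computational point. For (ME), I would simply note that Procedure~\ref{proc:meastransferproc} instructs Alice to measure in the maximally-entangled basis $\{\ket{\phi_i}\}$ of~\eqref{eq:phimeas} and Bob to apply a single UEB correction $U_j$; this is precisely the data of a standard teleportation protocol in the sense of Theorem~\ref{Wernertighttheorem}, so the only entangled resource consumed is the one $d$-dimensional maximally entangled state $\eta=\sum_i\ket{i}\otimes X\ket{i}$, giving (ME) by inspection. For (MC), the only data reaching Bob are the speakable orbit label $k$ and the physical reading $g\cdot x$, from which he extracts the decoding index $j$ via $g\cdot x\in D_j^k$; together these are equivalent to a single index $j\in I$, and $\abs{I}=d^2$. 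Invoking the standard fact that the outcomes of a UEB measurement are equiprobable with probability $1/d^2$, the transmitted index $j$ is uniform on $I$, so Bob gains exactly $\log_d(d^2)=2$ dits.

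The substantive step is (NL). Fixing Alice's frame and hence her labelling $[\cdot]_A$, I want the joint law of the publicly transmitted data — the orbit label $k$ and the physical reading $x$ — to be independent of that frame. The label $k$ occurs with probability $\sum_{i\in I_k}1/d^2=\abs{I_k}/d^2$, which is fixed by the UEB orbit structure and hence frame-independent. Conditioned on orbit $k$, Alice's outcome $i\in I_k$ is uniform and she samples $x$ uniformly from $(E_i^k)_A$. Using Definition~\ref{def:rfencodingschemes}, the regions $(E_i^k)_A=\bigsqcup_{l\in L}(R_{lc_i})_A$ are disjoint, of equal measure, and partition $C$ up to measure zero as $i$ ranges over $I_k$, because every $h\in H$ is uniquely of the form $lc_i$. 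A one-line measure calculation then collapses the conditional density of $x$ to the constant density on $C$, i.e. the transmitted reading is Haar-uniform on the whole channel. Since the uniform measure on $C$ is frame-invariant, this same distribution arises for every frame configuration of Alice, so observing the transmission yields no information about her frame; as only Alice transmits, and Bob's frame enters solely through his private decoding and correction, no information about either party's frame is communicated at any time.

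For (DR), I would appeal to the form of the effective channel in Theorem~\ref{thm:maintheorem}, which depends only on the single transformation $g$ relating Alice's frame at the instant of measurement to Bob's frame at the instant of receipt. The message $x$ is an unspeakable physical object in $C$, so arbitrary drift of either frame during transmission merely changes which group element \emph{names} this fixed reading and never alters $x$ itself; intermediate frame evolution therefore cannot enter the channel. The only residual requirement is that Bob decode and then apply $U_j$ in one and the same frame, which is exactly the stated hypothesis that his alignment be approximately constant between receipt and correction.

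The main obstacle is (NL): proving that the frame-averaged transmission is \emph{exactly} Haar-uniform on $C$. This rests on the equal-measure and partition properties of the encoding regions in Definition~\ref{def:rfencodingschemes} together with equiprobability of the UEB measurement outcomes; once the conditional density is shown constant, frame-independence follows immediately from invariance of the uniform measure, and the remaining three properties are routine.
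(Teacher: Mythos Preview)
Your approach matches the paper's: each of (ME), (MC), (DR) is dispatched by a short structural observation, and (NL) is the one place requiring the equal-measure / partition argument to conclude that the transmitted reading is uniform on $C$ and hence frame-independent.

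One small point of scope: in your (NL) argument you invoke Definition~\ref{def:rfencodingschemes} and the explicit decomposition $E_i^k=\bigsqcup_{l\in L}R_{lc_i}$, which is specific to the \emph{matched} construction. Procedure~\ref{proc:meastransferproc} is stated for an arbitrary compatible encoding scheme with $D_i^k=E_i^k$ (and is applied in the paper to the rod channel, which is not matched). The conclusion you need---that the $E_i^k$ for $i\in I_k$ have equal measure and cover $C$ up to measure zero---follows at that generality from Definition~\ref{def:encodingscheme} (the $D_i^k$ cover $C$), the hypothesis $E_i^k=D_i^k$, and Definition~\ref{def:compatiblechannel} together with the standing assumption that the $G$-action on $C$ is measure-preserving (so the $H$-permuted regions have equal measure). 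The paper's proof uses exactly these general facts rather than the matched-channel formula. A second, more minor imprecision: in (MC) you write that the data reaching Bob ``are equivalent to a single index $j\in I$'', but the physical reading $g\cdot x$ carries continuous information beyond $j$; the paper is careful to say only that the \emph{useful} information Bob extracts is the decoding region, hence two dits.
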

\begin{proof}
(NL): Alice has an equal probability of measuring any $i \in I_k$, and chooses a reading with uniform probability from the subsets $\{E_i^k = D^i_k\}_{i \in I}$, which have equal measure and cover the space of readings up to a set of measure zero. The message therefore communicates no information about Alice's frame configuration, since without prior knowledge of the reading Alice sent, nothing can be learned from the reading that is received.

(MC): The only useful information Bob learns from the message he receives is which of his decoding subsets $\{D_i^k\}_{i \in I_k}$ the reading he receives lies in; there are $\sum_k |I_k| = |I| = d^2$ possible messages, which are equiprobable. In total, therefore, he receives two dits of unspeakable classical information.

(ME): Obvious.

(DR) In Alice's frame, reference frame misalignment affects Bob's reading of the transmitted measurement result, and his unitary correction. Provided that his frame configuration remains approximately constant between these steps, the  effective channel~\eqref{eq:maintheorem} is unaffected by arbitrary changes in reference frame alignment throughout the rest of the procedure.
\end{proof}

\subsubsection{Perfect scheme}

\ignore{
In general Procedure~\ref{proc:meastransferproc} communicates an infinite amount of reference frame information. 
\begin{proposition}\label{prop:rfinfotransfer}
Suppose Alice measures $i \in I_k$, performs Procedure~\ref{proc:meastransferproc}, and Bob receives $y \in C$. Bob now knows that the reference frame misalignment $g_{AB} \in G$ lies in the subset \begin{equation}\label{eq:subsetpossiblerfmisalsafterproc}\{g \in G \; | \; g^{-1} \cdot y = x \text{\emph{  for some }} x \in \sqcup_j E^k_j\}.\end{equation}
\end{proposition}
}

\begin{procedure}[Perfect scheme]\label{proc:perfectmeastransferproc}

Let $C$ be an unspeakable channel for $G$ (and therefore also for $H$), and let $(D^k_i,E^k_i)_{i \in I}$ be encoding schemes for $I_k$ compatible with $\sigma_k: I_k \times H \to I_k$, and where $E^k_i=X^k_i$, where $X^k_i \subset D^k_i$ is a finite set of readings in $C$, and moreover $H$ acts transitively on $\sqcup_i X^k_i$.

Alice measures in the basis $\{\ket{\phi_i}\}_{i \in I}$ \eqref{eq:phimeas} as in a standard teleportation protocol and obtains the result $i \in I_k$. The result is transmitted as follows.
\begin{enumerate}
\item Alice transmits the orbit label $k$ through a speakable channel.
\item Alice sends a reading $x^k_i \in X_i^k$ chosen uniformly at random.
\item Bob receives $y= g \cdot x^k_i \in g \cdot X_i^k = X_j^k \subset D_j^k$ and performs the correction $\rho( r_j(y) ) U_j \rho( r_j(y))^{\dagger}$, where $r_j(y) \in G$ is any element such that $r_j(y) \cdot x^k_j = y$ for some $x^k_j \in X^k_j$.
\end{enumerate}
In words, Bob realigns his frame (actively or passively) so that the reading he receives is $x^k_j \in X_j^k$, and then performs the correction $U_j$. Here $g$ is the reference frame transformation taking Alice's frame at the time of measurement onto Bob's frame at the time of receipt.
\end{procedure}
\noindent
\begin{proposition}[Effective channel for Procedure~\ref{proc:perfectmeastransferproc}]\label{prop:effchanrelaxedwrecording}
Suppose that Alice measures some result $i \in I_k$. Then the quantum channel induced by Procedure~\ref{proc:perfectmeastransferproc} is as follows:
\begin{equation}\label{eq:pointencodingchannelwithrecording}
\mathcal{T}_i(\sigma)
= \int\limits_{\text{\emph{Stab}}_G(x_i)} \emph ds \,[\rho( s)^{\dagger} U_i \rho(s)U_i^{\dagger}] \;(\sigma)
\end{equation}
Here $\emph{d}s$ is the Haar measure on $\emph{Stab}_G(x_i^k)$.
\end{proposition}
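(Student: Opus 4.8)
The plan is to work entirely in Alice's frame, write down the operator Bob actually applies, average it over the unknown misalignment $g\in G$ with Haar measure, and then reduce the resulting integral over all of $G$ to an integral over the stabiliser $\mathrm{Stab}_G(x_i^k)$. Fix the point $x_i := x_i^k \in X_i^k$ that Alice transmits. Bob receives $y = g\cdot x_i$, decodes the index $j$ with $y\in D_j^k$, chooses $r := r_j(y)\in G$ with $r\cdot x_j = y$ for some $x_j\in X_j^k$, and applies $\rho(r)U_j\rho(r)^\dagger$ in his own frame. By the reference-frame transfer rule (conjugation by $\rho(g)$, as in \eqref{eq:rightconjactionfirstappears}) this correction reads $\rho(g^{-1}r)U_j\rho(g^{-1}r)^\dagger$ in Alice's frame, so exactly as in \eqref{generalnaivechanneleqn} the induced channel is $\mathcal T_i(\sigma) = \int_G \mathrm dg\,[\rho(g^{-1}r)U_j\rho(g^{-1}r)^\dagger U_i^\dagger](\sigma)$.

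The algebraic heart is to rewrite the integrand as $[\rho(s)U_i\rho(s)^\dagger U_i^\dagger](\sigma)$ for a suitable $s\in \mathrm{Stab}_G(x_i)$. Writing $m := g^{-1}r$, the relations $y=g\cdot x_i$ and $r\cdot x_j = y$ give $m\cdot x_j = x_i$. Since $H$ acts transitively on $\sqcup_i X_i^k$ by hypothesis, pick $h\in H$ with $h\cdot x_i = x_j$; then $mh\cdot x_i = x_i$, so setting $s := mh$ we have $m = sh^{-1}$ and $\rho(m)U_j\rho(m)^\dagger = \rho(s)\bigl(\rho(h)^\dagger U_j\rho(h)\bigr)\rho(s)^\dagger$. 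Compatibility of the encoding scheme (Definition~\ref{def:compatiblechannel}) forces the $H$-action to send $X_i^k$ to $X_{\sigma^{-1}(i,h)}^k$; disjointness of the $X$'s together with $h\cdot x_i = x_j\in X_j^k$ then yields $j=\sigma^{-1}(i,h)$, i.e. $\sigma(j,h)=i$. Feeding this into the equivariance relation $\rho(h)^\dagger U_j\rho(h) = \alpha(j,h)U_{\sigma(j,h)} = \alpha(j,h)U_i$ shows the correction is $\alpha(j,h)\rho(s)U_i\rho(s)^\dagger$; since $[\,\cdot\,]$ denotes conjugation, the unit-modulus phase $\alpha(j,h)$ cancels, leaving $[\rho(s)U_i\rho(s)^\dagger U_i^\dagger](\sigma)$ with $s\in\mathrm{Stab}_G(x_i)$ as required.

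Finally I would carry out the measure reduction. Disintegrating the Haar measure on $G$ along the orbit map $g\mapsto y=g\cdot x_i$, which identifies the orbit with $G/S$ for $S := \mathrm{Stab}_G(x_i)$, writes $\int_G\mathrm dg = \int_{G/S}\mathrm d\bar y\int_S\mathrm d\bar s$ with $g = g_0(\bar y)\,\bar s$. For fixed $\bar y$ (hence fixed $j$, $h$ and a fixed choice of $r$) one computes $s = \bar s^{-1}s_0$ with $s_0 := g_0^{-1}rh\in S$, so by left-translation and inversion invariance of Haar measure the fibre variable sweeps out $S$ with its Haar measure, and the outer integral contributes the normalising factor $\mathrm{vol}(G/S)=1$. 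This produces $\int_S\mathrm ds\,[\rho(s)U_i\rho(s)^\dagger U_i^\dagger](\sigma)$, and a further substitution $s\mapsto s^{-1}$ (again inversion invariance) converts it into the stated form $\int_{\mathrm{Stab}_G(x_i)}\mathrm ds\,[\rho(s)^\dagger U_i\rho(s)U_i^\dagger](\sigma)$.

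I expect the main obstacle to be this last, measure-theoretic step: justifying the disintegration of Haar measure over the orbit and checking the normalisation, while simultaneously handling the non-uniqueness of $r$ (any element with $r\cdot x_j = y$ is allowed) and of $h$ (determined only up to $\mathrm{Stab}_H(x_i)$). Both ambiguities only shift $s$ within $S$—the $h$-ambiguity because $\mathrm{Stab}_H(x_i)$ fixes the index $i$ under $\sigma$, once more by compatibility—so they leave the final Haar integral invariant; making this invariance rigorous is the delicate part. The algebraic identities, by contrast, are a direct bookkeeping exercise once the transfer rule and the equivariance/compatibility relations are combined.
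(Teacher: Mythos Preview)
Your proposal is correct and follows essentially the same route as the paper: decompose the misalignment so that $g = r_j(y)\,h\,s$ with $h\in H$ taking $x_i$ to $x_j$ and $s\in\mathrm{Stab}_G(x_i)$, use equivariance of the UEB to convert $U_j$ back to $U_i$ up to a phase, and observe that only the stabiliser variable survives. The paper's proof is terser---it writes the decomposition directly and asserts in one line that ``the distribution over $\mathrm{Stab}_G(x_i^k)$ is uniform''---whereas you spell out the Haar disintegration and the handling of the ambiguities in $r$ and $h$, but the algebraic content and structure are the same.
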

\begin{proof}
Alice measures $i \in I_k$ and communicates $x_i^k$ to Bob, who receives $y \in D_j$, where $y = g \cdot x_i^k = (r_j(y) h_{ij} s) \cdot x^k_i$ for $h_{ij} \in H$ such that $h_{ij} \cdot x^k_i = x^k_j$ (this exists because $H$ acts transitively on $\sqcup_i X^k_i$) and some $s \in \text{Stab}_G(x_i^k)$.

The distribution over $\text{Stab}_G(x_i^k)$ is uniform.  We therefore have the following expression for the effective channel:
\begin{align*}
\mathcal{T}_k(\rho) &= \int\limits_{\text{Stab}_G(x_i^k)} \text ds\, [\rho(r_j(y) h_{ij} s)^{\dagger} \rho( r_j(y) ) U_j \rho( r_j(y))^{\dagger} \rho(r_j(y) h_{ij} s)U_i^{\dagger}] \;(\sigma) \\
&= \int\limits_{\text{Stab}_G(x_i^k)} \text ds \,[\rho( h_{ij} s)^{\dagger} U_j \rho( h_{ij} s)U_i^{\dagger}] \;(\sigma) \\
&= \int\limits_{\text{Stab}_G(x_i^k)} \text ds \,[\rho( s)^{\dagger} U_i \rho(s)U_i^{\dagger}] \;(\sigma)
\end{align*}
At each step, we used the fact $\rho$ is a representation. 
For the final equality, we used equivariance of the unitary error basis.
\end{proof}
\noindent
In particular, this produces perfect teleportation for matched channels.
\begin{proposition}\label{prop:perfecttel}
Procedure~\ref{proc:perfectmeastransferproc} with the perfect encoding scheme on a matched channel (Definition~\ref{def:rfencodingschemes}) results in perfect teleportation.
\end{proposition}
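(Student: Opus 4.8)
The plan is to feed the perfect matched scheme into the effective-channel formula of Proposition~\ref{prop:effchanrelaxedwrecording} and then show that, on a matched channel, the domain of integration $\mathrm{Stab}_G(x_i^k)$ collapses to a subgroup on which $\rho$ acts by scalars. Once that is established, the integrand $[\rho(s)^{\dagger}U_i\rho(s)U_i^{\dagger}]$ is the identity superoperator, and integrating it against the normalised Haar measure returns $\sigma$ unchanged, giving perfect teleportation.

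First I would check that the perfect matched scheme of Definition~\ref{def:rfencodingschemes} really satisfies the hypotheses of Procedure~\ref{proc:perfectmeastransferproc}. Its encoding sets $E_i = \{\bigsqcup_{l\in L}[lc_i]\}$ are finite collections of readings contained in the decoding regions $D_i = \bigsqcup_{l\in L} R_{lc_i}$, so we may take $X_i^k = E_i \subset D_i$. As $i$ ranges over the coset representatives and $l$ over $L$, the products $lc_i$ exhaust $H$, so $\bigsqcup_i X_i^k = \{[h]\mid h\in H\}$; by Lemma~\ref{lem:blocktransfunderrfchange} (equivalently \eqref{eq:readingtransfsunderrfchange}) these readings carry the left-regular action of $H$, which is transitive. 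Hence Proposition~\ref{prop:effchanrelaxedwrecording} applies and yields the effective channel \eqref{eq:pointencodingchannelwithrecording}, namely $\mathcal{T}_i(\sigma) = \int_{\mathrm{Stab}_G(x_i^k)} \mathrm{d}s\,[\rho(s)^{\dagger}U_i\rho(s)U_i^{\dagger}](\sigma)$.

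The key step is the stabiliser computation. Because $C$ is matched to $\rho$ we have $\Ker(\rho)\subseteq\Ker(\alpha)$, so the $G$-action factors through $\tilde G = G/\Ker(\rho)$, and this reduced action is free. Freeness forces $\mathrm{Stab}_{\tilde G}(x_i^k) = \{[e]\}$, and pulling this back along the quotient map $q: G\to\tilde G$ gives $\mathrm{Stab}_G(x_i^k) = q^{-1}(\{[e]\}) = \Ker(\rho)$ exactly. By definition of the reduced transformation group, every $s\in\Ker(\rho)$ satisfies $\rho(s)=\lambda_s\,\mathbbm{1}$ for some phase $\lambda_s\in\U(1)$, so $\rho(s)^{\dagger}U_i\rho(s) = U_i$ and the superoperator $[\rho(s)^{\dagger}U_i\rho(s)U_i^{\dagger}]$ is the identity. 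The integrand in \eqref{eq:pointencodingchannelwithrecording} is therefore constant and equal to $\sigma$, and since $\mathrm{d}s$ is the normalised Haar measure on $\Ker(\rho)$ we conclude $\mathcal{T}_i(\sigma)=\sigma$ for every measurement result $i$, i.e.\ perfect teleportation.

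The main obstacle is the stabiliser identification: one must use \emph{both} halves of the matched-channel hypothesis — that $\Ker(\rho)\subseteq\Ker(\alpha)$ (so the action descends to $\tilde G$) and that the reduced action is free — to pin down $\mathrm{Stab}_G(x_i^k)$ on the nose rather than merely bounding it. After that the argument is immediate, the only delicate point being that $\rho$ acts trivially on $\Ker(\rho)$ only up to a phase; this is exactly why the conclusion is perfect teleportation \emph{up to a global phase}, the phase being harmlessly absorbed into the choice of maximally entangled state $X$ from Theorem~\ref{Wernertighttheorem}.
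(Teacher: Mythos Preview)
Your proof is correct and follows the same route as the paper: invoke Proposition~\ref{prop:effchanrelaxedwrecording} and then use freeness of the matched-channel action to collapse the stabiliser. The paper compresses this to a single sentence (``The stabiliser of any reading is trivial, since the action is free''), tacitly working in $\tilde G$; your version is more careful in distinguishing $\mathrm{Stab}_G(x_i^k)=\Ker(\rho)$ from $\mathrm{Stab}_{\tilde G}(x_i^k)=\{e\}$, which is a genuine improvement in precision.

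One small over-complication: since $\Ker(\rho)$ is the honest kernel of the unitary representation $\rho:G\to\U(d)$, every $s\in\Ker(\rho)$ satisfies $\rho(s)=\mathbbm{1}$ on the nose, not merely up to a phase $\lambda_s$. Your final paragraph about absorbing a residual global phase into $X$ is therefore unnecessary; the integrand is literally the identity superoperator and the teleportation is exact, not just up to phase.
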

\begin{proof}
The stabiliser of any reading is trivial, since the action is free.\end{proof}
\noindent
The perfect scheme also possesses the (ME) and (DR) properties, for exactly the same reasons as the tight scheme.

\infinitepraacknowledgements{\version}
\biblstyle{\version}
\bibliography{FiniteGroupTeleportation}

\appendix
\appendixpage

\section{Reference frame transformation rules}
\label{sec:refframetransfrulesproof}

In this appendix we briefly summarise the effect of reference frame transformations on measurements and operations. Let $\mathcal{F}$ be the space of reference frame configurations.
Let $V$ be the $d$-dimensional Hilbert space of a system whose states are described according to a reference frame.  The Hilbert space carries a unitary representation $\rho: G \to B(V)$, which encodes how states transform upon a change of reference frame: a state with vector $\ket{\psi}$ in reference frame $f\in \mathcal{F}$ will  have vector $\rho(g) \ket{\psi}$ in reference frame $g\cdot f$. Let $g_{AB} \in G$ be the reference frame transformation taking Alice's frame $f_A \in \mathcal{F}$ onto Bob's frame $f_B \in \mathcal{F}$; that is, $f_B = g_{AB} \cdot f_A$. We then have the following expressions:

\begin{proposition}\label{prop:framechangestatesandops}
A state with vector  $\ket{\psi}$ in Bob's frame has vector  $\rho(g)^{\dagger}\ket{\psi}$ in Alice's frame. An linear map with matrix  $M: V \to V$ in Bob's frame has matrix  $\rho(g)^{\dagger}M \rho(g)$ in Alice's frame. A general operation $\Phi: L(V) \to L(V)$ in Bob's frame is the operation  $[\rho(g)^{\dagger}] \circ \Phi \circ [\rho(g)]$ in Alice's frame.
\end{proposition}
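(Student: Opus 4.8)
The plan is to derive all three transformation rules from the single postulate stated just before the proposition --- that a state with vector $\ket{\psi}$ in frame $f$ acquires vector $\rho(g)\ket{\psi}$ in frame $g\cdot f$ --- together with the unitarity of $\rho$. Here the transformation $g$ appearing in the statement is the specific element $g_{AB}$ with $f_B = g_{AB}\cdot f_A$. The pure-state rule is the base case; the other two statements follow from it by elementary linear algebra and linearity, each obtained by applying the base case twice.

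First I would establish the vector rule. Since $f_B = g_{AB}\cdot f_A$, suppose the physical state has vector $\ket{\phi}$ in Alice's frame. Applying the postulate to the transformation $g_{AB}$ carrying $f_A$ to $f_B$, its vector in Bob's frame is $\rho(g_{AB})\ket{\phi}$. By hypothesis this equals the given Bob-frame vector $\ket{\psi}$, so $\ket{\phi} = \rho(g_{AB})^{-1}\ket{\psi} = \rho(g_{AB})^{\dagger}\ket{\psi}$, using unitarity. This gives the first statement.

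For the linear-map rule I would track the input and output vectors of the map through the frame change. Let $M$ be the matrix in Bob's frame and let $\ket{v}$ be any input vector as described in Alice's frame. Its Bob-frame description is $\rho(g_{AB})\ket{v}$ (the inverse of the vector rule just proved); the map sends this to $M\rho(g_{AB})\ket{v}$ in Bob's frame; and applying the vector rule to this output yields $\rho(g_{AB})^{\dagger}M\rho(g_{AB})\ket{v}$ in Alice's frame. As $\ket{v}$ is arbitrary, the Alice-frame matrix is $\rho(g_{AB})^{\dagger}M\rho(g_{AB})$. The general-operation rule then proceeds one level up. I would first upgrade the vector rule to density matrices: a pure state $\ket{\psi}\bra{\psi}$ in Bob's frame becomes $\rho(g_{AB})^{\dagger}\ket{\psi}\bra{\psi}\rho(g_{AB}) = [\rho(g_{AB})^{\dagger}](\ket{\psi}\bra{\psi})$ in Alice's frame, and by linearity every state $\sigma$ transforms as $\sigma \mapsto [\rho(g_{AB})^{\dagger}](\sigma)$, with inverse $[\rho(g_{AB})]$. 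Feeding an Alice-frame state $\sigma$ into $[\rho(g_{AB})]$ to obtain its Bob-frame description, applying $\Phi$, and converting back with $[\rho(g_{AB})^{\dagger}]$ gives the Alice-frame operation $[\rho(g_{AB})^{\dagger}]\circ\Phi\circ[\rho(g_{AB})]$.

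The computations are routine; the only genuine subtlety --- and the place where directional errors are easy to make --- is consistently keeping straight which way each conversion runs (Alice-to-Bob uses $\rho(g_{AB})$, Bob-to-Alice uses $\rho(g_{AB})^{\dagger}$) and recalling that the $g$ of the statement is precisely $g_{AB}$. Pinning down the base vector rule first and then invoking it exactly twice --- once on the input, once on the output --- in each of the two remaining parts keeps this bookkeeping under control.
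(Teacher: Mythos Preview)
Your proposal is correct and follows essentially the same strategy as the paper: derive the vector rule from the postulate and unitarity, then bootstrap the operator and superoperator rules from it. The only cosmetic differences are that the paper proves the linear-map rule by comparing matrix elements in Alice's and Bob's orthonormal bases (rather than your direct input/output vector tracking), and handles general operations by invoking the Kraus decomposition (rather than your density-matrix route); both variants are trivially equivalent to yours.
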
 

\begin{proof}
By definition a state described in Alice's frame as $\ket{\psi}$ will be described in Bob's frame as $\rho(g) \ket{\psi}$; the first equation follows immediately. 

For the linear maps, consider that an linear map is defined by its matrix elements in some orthonormal basis. Bob performs the operation with matrix elements $M_{ij}$ in his frame; that is, he performs the operation $M_B$ such that $\bra{i_B}M_B\ket{j_B} = M_{ij}$. Now note that $\ket{i_B} = \rho(g)^{\dagger} \ket{i_A}$, so $M_{ij}=\bra{i_B}M_B\ket{j_B} = \bra{i_A} \rho(g) M_B \rho(g)^{\dagger} \ket{j_A}$. In Alice's frame, therefore, Bob has performed the operation $M_B$ such that  $\rho(g) M_B \rho(g)^{\dagger} = M_A$; this operation is therefore related to $M_A$ by $M_A =  \rho(g)^{\dagger}M_B \rho(g)$.  The same argument can be extended to general operations by considering the Kraus maps.
\end{proof}
\section{Proof of Theorem~\ref{thm:maintheorem}}\label{app:effchanexpproof}
We now provide the postponed proof of this theorem.
\begin{theorem}[Effective channel for a general encoding scheme]
Suppose that Alice measures some result $i \in I_k$, where $D_k^i = E_k^i$ for all $i \in I_k$. Then the channel induced by Procedure~\ref{proc:meastransferproc} is as follows:
\begin{equation}
\mathcal{T}_k(\rho)= \frac{|I_k|}{\mu_C(E_0^k)} [\pi(c_i)] \circ \int\limits_G \left( \emph{d}g\, p(g)\, [\pi(g)^{\dagger} U_0  \pi(g)U_0^{\dagger}] \right) \circ [\pi(c_i)^{\dagger}] \;(\rho)
\end{equation}
Here $0 \in I_k$ is any element of the orbit;  the normalising factor $\mu_C(E_0^k)$ is the measure of $E_0^k$ in $C$; $p(g) = \int_{E_0^k \subset C} \emph{d}x \, \mathbbm{1}_{D_0^k}(g \cdot x)$, where $\mathbbm{1}_{D_0^k}$ is a continuous approximation to the indicator function for $D_0^k \subset C$; and $\{c_i\}_{i \in I_k}$, $c_i \in H$ are such that $c_i \cdot E^k_0 =E^k_i$.
\end{theorem}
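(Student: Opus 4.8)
The plan is to first write down the effective channel for a single fixed measurement result $i\in I_k$ directly from the description of the procedure, and then to massage it into the claimed form using two changes of variables together with equivariance of the UEB. Working throughout in Alice's frame, the reference frame transformation rules (Proposition~\ref{prop:framechangestatesandops}) show that when Alice measures $i$ and Bob applies the correction $U_j$ in his own frame, that correction appears in Alice's frame as $[\rho(g)^{\dagger}U_j\rho(g)]$; generalising~\eqref{generalnaivechanneleqn} from the case $j=i$, the output state is therefore $[\rho(g)^{\dagger}U_j\rho(g)U_i^{\dagger}](\sigma)$. Averaging over the reading $x$, drawn uniformly from $E_i^k$, and over the misalignment $g$, drawn from the Haar measure, and recording that Bob applies $U_j$ precisely when the received reading $g\cdot x$ lies in $D_j^k$, gives
\begin{equation*}
\mathcal{T}_i(\sigma)=\frac{1}{\mu_C(E_i^k)}\int_G \mathrm{d}g\int_{E_i^k}\mathrm{d}x\sum_{j\in I_k}\mathbbm{1}_{D_j^k}(g\cdot x)\,[\rho(g)^{\dagger}U_j\rho(g)U_i^{\dagger}](\sigma),
\end{equation*}
where $\mathbbm{1}_{D_j^k}$ is the indicator of $D_j^k$ (replaced by the continuous approximation of the statement for the technical regularity). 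Since the $D_j^k$ cover $C$ up to measure zero this is automatically trace preserving, a useful consistency check.

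Next I would reduce every encoding region to the base region $E_0^k$. Because the $H$-action on $C$ is measure preserving and $c_i\cdot E_0^k=E_i^k$, the substitution $x=c_i\cdot x'$ rewrites the reading integral over $E_0^k$ and replaces $\mu_C(E_i^k)$ by $\mu_C(E_0^k)$, while $D_j^k=c_j\cdot D_0^k$ lets me re-express each indicator. The crucial step is then to reparametrise the group integral \emph{separately in each summand}: for the term indexed by $j$ I substitute $g=c_j g' c_i^{-1}$, which preserves the bi-invariant Haar measure of the compact group $G$. Under this substitution the indicator collapses to $\mathbbm{1}_{D_0^k}(g'\cdot x')$, and the operator part, after expanding $\rho(c_j g' c_i^{-1})=\rho(c_j)\rho(g')\rho(c_i)^{\dagger}$, is handled by equivariance. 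Here compatibility of the encoding scheme with $\sigma$ (Definition~\ref{def:compatiblechannel}) is essential: the relation $c_m\cdot E_0^k=E_m^k$ forces $\sigma(0,c_m^{-1})=m$ and hence $\sigma(m,c_m)=0$ for every $m$, so that $\rho(c_j)^{\dagger}U_j\rho(c_j)\sim U_0$ and $\rho(c_i)U_0\rho(c_i)^{\dagger}\sim U_i$, with the phases $\alpha$ cancelling inside the conjugation superoperator $[\,\cdot\,]$. This turns the $j$-th summand into $[\rho(c_i)]\circ[\rho(g')^{\dagger}U_0\rho(g')U_0^{\dagger}]\circ[\rho(c_i)^{\dagger}]$ applied to $\sigma$, which is manifestly independent of $j$.

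Finally, since all $|I_k|$ summands are now identical the sum contributes the factor $|I_k|$; pulling the $g$-independent conjugations $[\rho(c_i)]$ and $[\rho(c_i)^{\dagger}]$ outside the integral and performing the reading integral $\int_{E_0^k}\mathrm{d}x'\,\mathbbm{1}_{D_0^k}(g'\cdot x')=p(g')$ then yields exactly~\eqref{eq:maintheorem}. I expect the main obstacle to be this third step: one must resist trying a single change of variables for the whole sum (which is impossible, since the reparametrisation $g=c_j g' c_i^{-1}$ genuinely depends on $j$) and instead reparametrise summand by summand, then track the two applications of equivariance carefully enough that the phases drop out of the conjugations. Establishing $\sigma(i,c_i)=0$ from compatibility is the other delicate point, as it is precisely what couples the geometry of the regions $E_i^k$ to the combinatorics of the UEB action and makes the collapse to the single reference index $0$ possible.
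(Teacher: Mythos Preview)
Your proposal is correct and follows essentially the same approach as the paper's proof: write the channel as an average over $G\times C$ with indicator functions for the decoding regions, then use Haar-invariance and the measure-preserving $H$-action together with equivariance of the UEB to reduce every term to the base index $0$, so that the sum over $j$ collapses to a factor $|I_k|$ and the $x$-integral produces the weight $p(g)$. The only organisational difference is that the paper performs the reduction in two stages --- first substituting $(g,x)\mapsto(gc_i^{-1},c_i\cdot x)$ to pass from measurement $i$ to measurement $0$, and then, for each summand $j$, substituting $(g,x)\mapsto(c_j g,x)$ --- whereas you merge these into the single summand-dependent change of variables $g=c_j g' c_i^{-1}$; your explicit derivation of $\sigma(m,c_m)=0$ from compatibility is exactly the fact the paper uses implicitly when writing $U_i=\rho(c_i)U_0\rho(c_i)^{\dagger}$ up to a phase.
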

\begin{proof}
We define $U(x) = U_j \;| \;x \in D^k_j$. Then, in Alice's frame, Bob's correction will be:
$$\pi(g_{AB})^{\dagger} U(g_{AB} \cdot x ) \pi(g_{AB}),$$
where $x \in E^k_i$ is the direction sent by Alice. Since both $g_{AB} \in G$ and $x \in E^k_i$ are unknown and uniformly distributed, we must average over both. When Alice measures $i \in I_k$, the channel is as follows for input state $\sigma$:
\begin{equation}
\mathcal{T}_i^{k}(\sigma)=\frac{1}{\mu_C(E_i^k)} \int\limits_{G \times C} \text dg \, \text dx \, \mathbbm{1}_{E^k_i}(x)\, [\rho(g)^{\dagger} U(g \cdot x) \rho(g) U_i^{\dagger}]\;(\sigma)
\end{equation}
Here $\mathbbm{1}_{E^k_i}$ is a continuous approximation to the indicator function for the region $E_i \subset C$.

First we show that $\mathcal{T}_i^k = [\rho(c_i)] \circ \mathcal{T}_0^k \circ [\rho(c_i)^{\dagger}]$; that is, every measurement result in a given orbit produces a similar channel. Indeed, since the product measure $dg\,d\phi$ is invariant under the left $G$-action $g_1 \cdot (g_2,x) = (g_2 g_1^{-1},g_1 \cdot x)$ on $G \times C$,  we can make the change of variables $(g,x) \mapsto (g c_i^{-1}, c_i \cdot x)$:
\begin{align*}
\mathcal{T}_i^{k}(\sigma)&= \frac{1}{\mu_C(E_i^k)} \int\limits_{G \times C} \text dg \, \text dx\, \mathbbm{1}_{E^k_i}(c_i \cdot x) [\rho(c_i)  \rho(g)^{\dagger} U(g \cdot x) \rho(g) \rho(c_i)^{\dagger} U_i^{\dagger} \rho(c_i) \rho(c_i)^{\dagger}]\, (\sigma) \\
&= \frac{1}{\mu_C(E_0^k)}\, [\rho(c_i)] \circ  \int\limits_{G \times C} \text dg\, \text dx\, \mathbbm{1}_{E^k_0}(x)  [\rho(g)^{\dagger} U(g \cdot x) \rho(g) U_1^{\dagger}] \; \circ [\rho(c_i)^{\dagger}] \, (\sigma)\\
&= [\rho(c_i)] \circ \mathcal{T}_0^k \circ [\rho(c_i)^{\dagger}]
\end{align*}
To obtain the first equality we changed variables and used the fact that $\rho$ is a representation. For the second equality we used $\mathbbm{1}_{E^k_i}(c_i \cdot x) = \mathbbm{1}_{E^k_0}$, linearity, and the fact that the action of $G$ on $C$ is measure-preserving. 
We can therefore restrict our attention to the channel where Alice measures the index $0 \in I_k$. 

We will now express the integral for the channel $T_0^{k}$ as a sum over integrals where Bob performs a definite correction. The action $\nu: (g,x) \mapsto g \cdot x$ is continuous; it follows that the preimages of the open sets $D^k_i$ under $\nu$ are open and therefore measurable. That the open sets $\nu^{-1}(D^k_i)$ cover $G \times C$ up to a set of measure zero follows immediately from the fact that the $D^k_i$ cover $C$ up to a set of measure zero and $\nu$ is a submersion. \ignore{Firstly, since  $G \backslash (\sqcup_i R_i)$ is a set of measure zero, for every $p \in G \backslash (\sqcup_i R_i)$ there is no open set $U \subset G$ containing $p$ such that $U \subset G \backslash (\sqcup_i R_i)$; if there were, then the Haar measure of this set would be nonzero, since the Haar measure is nonzero on every nonempty open set. Therefore for each point $p \in G \backslash (\sqcup_i R_i)$, there must be some tangent vector $v \in T_pG$ such that movement along $v$ takes one out of $G \backslash (\sqcup_i R_i)$. We know that $\nu$ is a submersion; so there must be a vector in the preimage of $v$ under the differential $D\nu$ along which parallel transport takes us out of $\nu^{-1} (G \backslash (\sqcup_i R_i))$. Therefore there is no open set in $G\times G$ containing the preimage of this point and which is made up entirely of points in $\nu^{-1} (G \backslash (\sqcup_i R_i))$. Since this is true of all points in $G \backslash (\sqcup_i R_i)$, the set is of measure zero.} We may therefore split the domain of integration over the $\nu^{-1}(D^k_i)$:
\begin{align*}
\mathcal{T}_0^k(\sigma) =  \frac{1}{\mu_C(E_0^k)} \sum_{i \in I_k} \int_{G \times C} \text dg \, \text dx \, \mathbbm{1}_{E^k_0}(x) \mathbbm{1}_{D^k_i}(g \cdot x)\,[
 \rho(g)^{\dagger} U_i \rho(g) U_0^{\dagger} ]\;(\sigma)
\end{align*}
Now we observe that the integrals over $\nu^{-1}(D_i^k)$ are identical for all $i \in I_k$:
\begin{align*}
\mathcal{T}_0^k (\sigma) &= \frac{1}{\mu_C(E_0^k)} \sum_{i\in I_k} \int_{G \times C} \text dg\, \text dx \, \mathbbm{1}_{E^k_0}(x)\mathbbm{1}_{D^k_i}(g \cdot x)\,[ \rho(c_i^{-1} g)^{\dagger} U_0 \pi(c_i^{-1} g) U_0]\;(\sigma) \\
&=\frac{|I_k|}{\mu_C(E_0^k)} \int\limits_{G \times C} \text dg\, \text dx \,\mathbbm{1}_{E^k_0}(x)\mathbbm{1}_{D^k_0}(g \cdot x)\,[ \rho(g)^{\dagger} U_0 \rho(g) U_0]\;(\sigma)
\end{align*}
The first equality uses that $U_i = \rho(c_i) U_0 \rho(c_i)^{\dagger}$; in the second we performed the change of variables $(g, x) \mapsto (c_i g, x)$ and noted that $\mathbbm{1}_{D^k_i}((c_i g) \cdot x) = \mathbbm{1}_{D^k_0}(g \cdot x)$, since $D_i^k = E_i^k$ for all $i,k$. By Fubini's theorem this may be evaluated as an iterated integral, where $x$ is integrated over first:
\begin{equation*}
\mathcal{T}_0^k(\sigma) =  \frac{|I_k|}{\mu_C(E_0^k)} \int\limits_{G} \text dg \int\limits_{C} \text dx\, \mathbbm{1}_{E^k_0}(x)\mathbbm{1}_{D^k_0}(g \cdot x)\, [\rho(g)^{\dagger} U_0  \rho(g)U_0]\;(\sigma)
\end{equation*}
This produces a weighting for $g \in G$ which is precisely the measure in $C$ of the set $D^k_0 \cap (g \cdot E^k_0)$. The result follows.
\end{proof}

\ignore{
\section{Voronoi cells}
\label{sec:Voronoicells}
\begin{definition}
We say that $G$ has an invariant distance function if there is some distance function $\mu: G \times G \to \mathbb{R}$ which makes $G$ into a metric space and is invariant under translation, i.e. $\mu(g_1,g_2)=\mu(g g_1,g g_2) = \mu(g_1 g,g_2 g)$ for all $g_1,g_2,g \in G$.
\end{definition}
\begin{definition}\label{def:Voronoicells}
If $G$ has an invariant distance function, we define the \emph{Voronoi cells} $\{V_h \,| \, h \in H\}$ as follows:
$$
V_h = \{g \in G \;|\; \mu(h,g) < \mu(\tilde{h},g))\,\forall\, \tilde{h} \neq h\}
$$
That is, the Voronoi cell of $h \in H$ is the set of all $g\in G$ which are closer to it than to any other element of $H$.
\end{definition}
\noindent It is often possible to use the Voronoi cell $V_e$ of the identity as a fundamental domain. In our calculations for $\SU(2)$ uncertainty in Section~\ref{sec:numerics}, we  use the Voronoi cell of the identity under the Frobenius distance function as a fundamental domain for $\BOct \subset  \SU(2)$. We now define the Frobenius distance function and show that the Voronoi cell of the identity for this distance function on $\SU(2)$ is indeed a fundamental domain.

\begin{definition}\label{ex:frobdistancefunction}
For a matrix Lie group embedded in $M(n)$, one may consider the matrices within $G$ as forming a submanifold of $\mathbb{C}^{n^2}$; the Euclidean distance on that space induces a metric on $G$ by restriction, which we call the \emph{Frobenius distance function}: 
$$
\mu_F(M_1,M_2) = \sqrt{\frac{1}{d}\Tr[(M_1-M_2)^{\dagger}(M_1-M_2)]}
$$\end{definition}
\noindent \ignore{The following theorem shows that we can interpret the Frobenius distance as the `average case' error. 
The proof is not original. Here $\mathbb{E}$ signifies the expectation.
\begin{proposition}\label{prop:frobnorminterpretation} The Frobenius norm $N_F(M) = \sqrt{\frac{1}{d}\Tr[M^{\dagger}M]}$ measures the average absolute value of the image of pure states following $M$. Explicitly:
$$
N_F(M) = \sqrt{\mathbb{E}_{|x|=1}[\bra{x}M^{\dagger}M\ket{x}]}
$$
\end{proposition}
\begin{proof}
$\Gamma(M) = \mathbb{E}_{|x|=1}[\bra{x}M M^{\dagger}\ket{x}]$ is a linear functional on the space of matrices $V \otimes V^*$. There is an action of the unitary matrices $\text{U}(n)$ on $V \otimes V^*$ under which the functional is invariant; therefore, it is in the trivial subspace of the dual representation $(V \otimes V^*)^* \simeq V \otimes V^*$. But $V \otimes V^*$ has only a one-dimensional  trivial subspace; therefore $\mathbb{E}_{|x|=1}[\bra{x}-\ket{x}]$ must be the same as any  functional in this subspace up to a scalar multiple. $\Tr[M^{\dagger}M]$ is such a functional. To get the factor $\frac{1}{d}$ note that for the identity matrix the trace will give $d$, whereas the integral over all pure states will give 1 owing to the normalisation of the measure.
\end{proof}}
\noindent In order to show that the Voronoi cell of the identity is a fundamental domain, we first prove a simple lemma.
\begin{lemma}
Let $G$ be a compact Lie group with invariant distance function $\mu$, and let $H\subset G$ be a finite subgroup. Then the Voronoi cells $V_h$ are the $H$-translates $V_e h $. Moreover, the Voronoi cell of the identity $V_e$ is a fundamental domain if for every $h \in H$ the set
\begin{equation}\label{eqn:measurezerovoronoi}
\{g \in G\, | \, \mu(g,e) = \mu(g,h)\} 
\end{equation} 
has measure zero.  
\end{lemma}
\begin{proof}
It is easy to see that the Voronoi cells are all  $H$-translates of the Voronoi cell of the identity. Indeed, for $x \in V_e$ we have that $\mu(e,x) <  \mu(h,x)$ for all $h \neq e$. We therefore see that $xh \in V_h$, since $\mu(h_2,xh)  = \mu(h_2h^{-1},x)$, which is minimised when $h_2h^{-1} = e$, that is, when $h_2=h$. Therefore $V_h =  V_e h$. 

For the first statement, the $\{V_h\}_{h \in H}$ clearly cover $G$ except for some subset of the union 
$$
\bigcup_{h_1,h_2 \in H } \{g \in G \, | \, \mu(h_1,g) = \mu(h_2,g)\}
$$ 
of sets of points equidistant from two elements of $H$. If this is of measure zero then $V_e$ will be a fundamental domain. Now note that $\mu(h_1,g) = \mu(h_2,g) \Leftrightarrow \mu(e, g^{-1} h_1) = \mu(g, h_2)$. Let $\bar{g} = g^{-1}h_1$. We have 
\begin{align*}
\bigcup_{h_1,h_2 \in H } \{g \in G \, | \, \mu(h_1,g) = \mu(h_2,g)\} &=\bigcup_{h_1,h_2 \in H } \{(\bar{g} \in G \, | \, \mu(e,\bar{g})= \mu(h_1\bar{g}^{-1},h_2)\}  \\
& = \bigcup_{h_1,h_2 \in H }\{(\bar{g} \in G \, | \, \mu(e,\bar{g})= \mu( h_2^{-1}h_1, \bar{g})\} \\
& = \bigcup_{h \in H }\{(\bar{g} \in G \, | \, \mu(e,\bar{g})= \mu(h, \bar{g})\},
\end{align*}
so the union of sets of points equidistant from two elements of $H$ has measure zero if and only if the union of sets of points equidistant from the identity and one other element of $H$ does. 
\ignore{
Finally, we need only check one representative of each conjugacy class: 
\begin{align*}
\{g \in G\, | \, \mu(e,g) = \mu(g,h)\}  &= \{\bar{g}:=cgc^{-1} \in G\, | \, \mu(e,c^{-1}gc) = \mu(c^{-1}gc,h)\}\\
&=\{\bar{g} \in G\, | \, \mu(e,g) = \mu(g,chc^{-1})
\end{align*}}
\end{proof}
\noindent Finally, we show for $\SU(2)$ under the Frobenius distance function that the Voronoi cell of the identity is a fundamental domain.
\begin{proposition}\label{prop:frobidvcellisfunddomain}
For any $h \in \SU(2)$, the subset $\{g \in G\, | \, \mu(g,e) = \mu(g,h)\}$ has measure zero, where $\mu$ is the Frobenius distance function. 
\end{proposition}
\begin{proof}
We have: 
\begin{align*}
\mu(g,h)^2 &\sim |\Tr[(g - h)^{\dagger} (g-h)]| \\
        &= |\Tr[2\cdot\mathbbm{1} - (h^{\dagger}g + g^{\dagger}h)]| \\
        &= 2 |2 - \Re(\Tr[gh^{\dagger}])| 
\end{align*}
For $\mu(g,e) = \mu(g,h)$ it is therefore necessary that
\begin{equation}\label{eqn:distancesequalinsu2}|2 - \Re(\Tr[gh^{\dagger}])| = |2 - \Re(\Tr[g])|\end{equation}
Note that $\Re(\Tr[u]) = \Tr[u] = 2\cos(\theta_u/2)$ for any $u \in \SU(2)$, where $\theta_u$ is the angle of the corresponding rotation of the Bloch sphere.
Now we have\footnote{See Exercise 4.15 of~\cite{Nielsen2011}.} the following equation for the angle of rotation $\theta_{12}$ of the composition $R_{\hat{n}_2}(\theta_2) \circ R_{\hat{n}_1}(\theta_1)$ of two special unitary matrices which are rotations of the Bloch sphere through angles $\theta_1, \theta_2$ around the axes $\hat{n}_1, \hat{n}_2$ respectively:
\begin{equation}\label{eqn:angleofprodofrots}
c_{12}=c_1c_2-s_1s_2\hat{n}_1\cdot \hat{n}_2
\end{equation}
Here $c_{12} = \cos(\theta_{12}/2)$, $c_{i} = \cos(\theta_{i}/2)$ and $s_{i} = \sin(\theta_{i}/2)$. Suppose we have some $g=R_{\hat{n}_1}(\theta_1)$, $h= R_{\hat{n}_2}(\theta_2)$ for which Equation \ref{eqn:distancesequalinsu2} holds. Then we have $c_{12} = c_2$. We consider small changes in $c_1$. Locally parametrising $\SU(2)$ by the angle of rotation $\theta_1$ and the spherical polar angles $\phi_1 \in [-\pi, \pi), \psi_1 \in [0,\pi)$ determining $\hat{n_1}$, it is easy to check that there is no point at which $\frac{\partial c_{12}}{\partial \theta_1} = \frac{\partial c_{12}}{\partial \phi_1} = \frac{\partial c_{12}}{\partial \psi_1} = 0$. Therefore, we can always change the value of $c_{12}$ in Equation \ref{eqn:angleofprodofrots} by a small change in $c_1$. It follows that the set has measure zero, since the Haar measure on $\SU(2)$ is induced by a Riemannian metric.
\end{proof}
}

\section{Calculations}
\label{sec:numerics}

In this section we derive the numerical results presented in Table~\ref{tbl:numericsintro}.

\subsection{Map purity and its calculation}\label{sec:chanqualmeasures}
\def\T{\mathcal{T}}

The measure we use to evaluate the success of the protocol is the \emph{map purity}~\cite{Roga2013,Roga2011,Ziman2008}. Recall that the \emph{Choi-Jamio\l{}kowski (CJ)}  state $\rho_{\T}$ of a channel $\T$ on a Hilbert space of dimension $d$ is  
$$\rho_{\T} = \frac{1}{2}(\mathbbm{1} \otimes \T) \, (\omega), $$
where $\omega$ is the density matrix of the state $\frac{1}{\sqrt{d}} \sum_{i=0}^{d-1}\ket{i}\otimes \ket{i}$. (For calculations, recall that the density matrix of the CJ state can be obtained by `reshuffling' the entries of the superoperator matrix of the channel~\cite{Roga2013}.)
\begin{definition}
The \emph{map purity}  $P(\T)$ of a channel $\T$ on a Hilbert space of dimension $d$ is  the normalised purity of its CJ state; that is,
\begin{equation}\label{eq:normalisedmappurity}P(\T) := 1 - \frac{S(\rho_T)}{\ln(d^2)} = 1+ \frac{\Tr(\rho_{\T}\ln(\rho_{\T}))}{\ln(d^2)}
\end{equation}
\end{definition}
\noindent
For numerical optimisation we will additionally use the linear map purity.
\begin{definition}
The \emph{linear map purity}  $P^L(\T)$ of a channel $\T$ on a Hilbert space of dimension $d$ is defined as the linear purity of its CJ state; that is,
$$P^{\text{L}}(\T) = \Tr(\rho_{\T}^2).$$
\end{definition}
\ignore{
$$P^{\text{L}}(\rho_\T) = \Tr(\rho_{\T}^2)$$

We will briefly discuss why we have chosen this particular measure, rather than, for instance, the average \textit{fidelity} over pure state inputs \cite{Bouda2009,Emerson2005,Marzolino2015}. Firstly, the purity cannot be increased by performing local unitary operations, unlike the fidelity. For instance, if a qubit channel performs a $X$ gate, the fidelity will be low; however, Alice can simply perform a $X$ gate before performing the protocol, yielding the identity channel with perfect fidelity. We assume Alice and Bob have full knowledge of the effect of reference frame uncertainty on the channel, so it sensible to choose a measure which excludes such cases.
}
\noindent 
The map purity in the qubit case, which we consider in our examples, is very similar to minimum purity over pure state inputs~\cite{Roga2011}.

By~\eqref{generalnaivechanneleqn}, the channels we consider are of the following sort.
\begin{definition}
A \emph{random unitary channel} is a channel of the form 
$$ \sigma \mapsto \int_X \d x [U(x)]\,(\sigma)$$
for some label space and probability measure $(X, \d x)$, where each $U(x)$ is a unitary matrix. 
\end{definition}
\noindent In particular, our random unitary channels are 
$$ \sigma \mapsto \sum_{i} \int_G \d g \; p(i)q(g)[U(i,g)] (\sigma),$$
where $U(i,g)$ are the unitaries, the label space is $I \times G$, and the probability measure on the label space is $p(i) \d g$; this is the product of the probability $p(i)$ of measurement result $i$ (which is uniform), and the Haar measure $\d g$ over the group $G$ of reference frame misalignments. A little straightforward algebra yields the following useful expression for the linear map purity of these channels.
\begin{proposition}[Linear map purity of a random unitary channel]\label{randomunitarypuritylemma}
Let $\T$ be a random unitary channel on a Hilbert space of dimension $d$. Let the random unitaries be indexed by a discrete index $I=\{0, \dots, n-1\}$ with probability distribution $p(i)$ and a a continuous index $g \in G$ with  probability measure $\d g$. Then:
\begin{equation}\label{eq:randomunitarypurityequation}
P^L(\T) = \frac{1}{d^2} \sum_{i,j=0}^{n-1} \int_{G \times G} p(i) p(j) \d g \d g' |\Tr(U(i,g)^{\dagger}U(j,g'))|^{2}
\end{equation}
\end{proposition}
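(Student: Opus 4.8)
The plan is to realise the Choi--Jamio\l{}kowski state of $\T$ as a convex combination of pure states, one for each unitary in the ensemble, and then read off the linear purity as the double average of their pairwise overlaps. Write $\ket{\Omega} = \frac{1}{\sqrt d}\sum_k \ket k \otimes \ket k$ for the maximally entangled vector underlying $\omega$, so that $\omega = \ket{\Omega}\bra{\Omega}$. The essential observation is the standard vectorisation identity: for a single unitary $U$ we have $(\mathbbm 1 \otimes U)\ket{\Omega} = \frac{1}{\sqrt d}\sum_k \ket k \otimes U\ket k =: \ket{U}$, and hence the conjugation channel $[U]$ sends $\omega$ to the pure state $(\mathbbm 1 \otimes [U])(\omega) = \ket{U}\bra{U}$.

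First I would substitute the explicit form of the random unitary channel $\T(\sigma) = \sum_i \int_G p(i)\,\d g\,[U(i,g)](\sigma)$ into the definition of $\rho_\T$. Using linearity of $\mathbbm 1 \otimes (-)$ together with the vectorisation identity, this gives
$$\rho_\T = \sum_{i=0}^{n-1}\int_G p(i)\,\d g\;\ket{U(i,g)}\bra{U(i,g)}.$$
This is a genuine convex combination of the pure states $\ket{U(i,g)}\bra{U(i,g)}$: each $\ket{U(i,g)}$ is a unit vector, since $\braket{U}{U} = \frac 1 d \Tr(U^\dagger U) = 1$, and the weights integrate to $1$ because $\T$ is trace preserving and the ensemble is normalised.

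Second, I would compute $\Tr(\rho_\T^2)$ by squaring this expression. Expanding the square produces a double sum over $i,j$ and a double integral over $g,g'$, and $\Tr\big(\ket{U(i,g)}\bra{U(i,g)}\,\ket{U(j,g')}\bra{U(j,g')}\big) = \abs{\braket{U(i,g)}{U(j,g')}}^2$. The only genuine computation required is the overlap of two vectorised unitaries, which a one-line calculation gives as
$$\braket{U}{V} = \frac 1 d \sum_k \bra k U^\dagger V \ket k = \frac 1 d \Tr(U^\dagger V).$$
Substituting $\abs{\braket{U}{V}}^2 = \frac{1}{d^2}\abs{\Tr(U^\dagger V)}^2$ into the expansion yields exactly the claimed formula.

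All the steps are elementary, and I do not expect a serious obstacle; the points meriting a word of justification are the interchange of the trace with the two integrals and the interchange of squaring with integration. Both are legitimate because $G$ is compact and the integrands are continuous and uniformly bounded --- every $U(i,g)$ is unitary, so each overlap has modulus at most $1$ --- so that Fubini's theorem applies and the finite sums and integrals may be freely reordered. Beyond this bookkeeping the result follows immediately.
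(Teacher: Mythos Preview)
Your argument is correct and is precisely the ``straightforward algebra'' the paper alludes to but does not spell out: expressing the Choi state as a mixture of the vectorised unitaries $\ket{U(i,g)}$ and reading off $\Tr(\rho_\T^2)$ as the double average of $|\braket{U(i,g)}{U(j,g')}|^2 = \tfrac{1}{d^2}|\Tr(U(i,g)^\dagger U(j,g'))|^2$ is exactly the intended derivation. (Note that the factor $\tfrac{1}{2}$ in the paper's displayed definition of $\rho_\T$ appears to be a typo for general $d$; your computation uses the standard normalisation, which is what makes the stated formula come out right.)
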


\ignore{The space of qubit unitary error bases may be parametrised by six angle variables~\cite{Klappenecker2002}. Using the specific form of the integrand in (\ref{randomunitarypurityequation}), along with a  parametrisation of the group of reference frame transformations, we will show that some of these variables are redundant and may be eliminated, leaving in each case an optimisation problem which is practical to perform in \emph{Mathematica}.}

\ignore{
\noindent We will also need to calculate the map purity from the matrix expression of the superoperator for a given channel. Recall that a superoperator, as a linear map on the space of density matrices, can be written as a $d^2 \times d^2$ matrix~\cite{Zyczkowski2004,Miszczak2011}. The density matrix of the CJ state can be obtained by `reshuffling' the entries of this superoperator matrix and multiplying by a scale factor~\cite{Roga2013}, illustrated here for $d=2$:
$$
\begin{pmatrix}
A_{11} & A_{12} & A_{13} & A_{14} \\
A_{21} & A_{22} & A_{23} & A_{24} \\
A_{31} & A_{32} & A_{33} & A_{34} \\
A_{41} & A_{42} & A_{43} & A_{44}
\end{pmatrix}
\mapsto
\frac{1}{2}
\begin{pmatrix}
A_{11} & A_{12} & A_{21} & A_{22} \\
A_{13} & A_{14} & A_{23} & A_{24} \\
A_{31} & A_{32} & A_{41} & A_{42} \\
A_{33} & A_{34} & A_{43} & A_{44}
\end{pmatrix}
$$
\ignore{The entropy of the CJ state may then be calculated in the usual way as the sum 
$
-\sum_i \lambda_i \ln(\lambda_i)
$, 
where $\lambda_i$ are the nonzero eigenvalues of the density matrix.}
}
\ignore{
\subsubsection{Entanglement of formation and its calculation}

The other measure of channel quality we consider here is the \emph{entanglement of formation}~\cite{Bennett1996}  of the channel, which we define to be the entanglement of formation of the CJ state, $E(\rho_{\T})$. The entanglement of formation of the Choi state of the channel tells us how much entanglement remains in a Bell state originally belonging to Alice after half of it has been teleported to Bob through the channel in question. This is important in the case that the teleportation protocol is being used for entanglement swapping~\cite{Zukowski1993}, which is particularly important in the construction of quantum repeaters~\cite{Briegel1998,Duan2001,Sangouard2011}.

Calculating the quantity $E(\rho_{\T})$ is straightforward due to a result of Wooters~\cite{Wootters1998}, who gave a straightforward formula for the calculation of the entanglement of formation of a pair of qubits which we will not restate here, but may be found in the cited paper~\cite{Wootters1998}. 
}
\noindent
We now consider teleportation of quantum systems carrying fundamental representations of the reference frame transformation groups $\U(1)$ and $\SU(2)$. For each of these groups, we first find the UEB which optimises the linear map purity of the quantum channel resulting from a conventional protocol~\eqref{generalnaivechanneleqn}, and then calculate the map purity of the quantum channel arising from that UEB, obtaining the numbers in the second column of Table~\ref{tbl:numericsintro}. We then calculate the map purity for certain of our tight schemes, obtaining the numbers in the third column of that table.

\subsection{Calculations for $\U(1)$}\label{sec:u1calculations}
Here we consider the case  $G=\U(1)$, where the group of reference frame transformations acts on the qubit state as follows:
\begin{equation}
\label{u1representationdefn}
\theta \mapsto \begin{pmatrix}1 & 0 \\0 & e^{i \theta}\end{pmatrix}
\end{equation}

\paragraph{Conventional scheme.}
We begin by finding the UEB which optimises linear map purity for a conventional protocol. A general qubit UEB may be expressed as $U \mathcal{E} V$, where $U, V$ are arbitrary unitary matrices and $\mathcal{E} = \{ X_0,X_1,X_2,X_3 \}$ is the Pauli UEB~\eqref{eq:paulis}. \ignore{By the correspondence between UEBs and tight teleportation protocols (Theorem~\ref{Wernertighttheorem}) we see that with only speakable communication and uniform reference frame uncertainty, the channel will be random unitary with equiprobable unitaries.}
Since we ignore global phase, we need only consider unitaries up to their induced rotation of the Bloch sphere. Let $R_{\hat{n}}(\theta)$ be a Bloch sphere rotation through an angle $\theta$ around the $\hat{x}$ axis; let $X_i$ be a Pauli rotation (that is, a rotation through an angle $\pi$ around the $x$-, $y$- or $z$ axis); and let $\hat{x},\hat{y}$ be two unit vectors which correspond to the choice of~UEB. Then the  equiprobable unitaries are as follows:
\begin{align}
U_{ig}  &= g V^{\dagger} X_i U^{\dagger} g^{\dagger} U X_i V \label{preshufflerandunitaryu1}\\
                &\sim V g V^{\dagger} X_i U^{\dagger} g^{\dagger} U X_i\label{postshufflerandunitaryu1}\\
                &= R_{\hat{x}}(\theta)R_{X_i(\hat{y})}(-\theta)
\end{align}
We write $\sim$ to indicate that replacing unitaries (\ref{preshufflerandunitaryu1}) with unitaries (\ref{postshufflerandunitaryu1}) will yield a channel with the same purity, because of cyclicity of the trace in~\eqref{eq:randomunitarypurityequation}. The second equality follows by the fact that conjugating a rotation $R_{\hat{x}}(\theta)$ by another rotation $Q$ gives $Q R_{\hat{x}}(\theta) Q^{-1} = R_{Q(\hat{x})}(\theta)$. By Lemma \ref{randomunitarypuritylemma} we therefore have the following expression for the effective channel:
\begin{equation}\label{speakablepurityforu(1)}
P(\T) = \frac{1}{256 \pi ^2} \sum_{i,j} \int_{0}^{2\pi} \int_{0}^{2\pi} \d\theta_1 \d\theta_2 |\Tr[R_{X_j(\hat{y})}(-\theta_2)R_{X_i(\hat{y})}(\theta_1)R_{\hat{x}}(\theta_2-\theta_1)]|^{2}
\end{equation}
Here the choice of UEB corresponds to a choice of two unit vectors $(\hat{x},\hat{y})$ or equivalently a choice of angles $(\psi_{\hat{x}},\psi_{\hat{y}}, \phi_{\hat{x}},\phi_{\hat{y}}) \in [0,\pi]^2 \times [0,2\pi]^2$. The factor in front of the integral is a product of the normalisation factors for the parameterisation of $\U(1)$ and the $1/4$ probabilities for measurement results $i$ and $j$. The simplicity of the integral allows us to numerically evaluate it for given $\hat{x},\hat{y}$ with negligible error. We performed Nelder-Mead maximisation over $\hat{x}$, $\hat{y}$ and found optimality of the Pauli UEB, corresponding to angles $(0,0,0,0)$.
The normalised map purity for this UEB is
$$1+ \frac{1}{\ln(4)}(0.75\ln(0.75)+0.25\ln(0.25)) \simeq 0.59.$$

\paragraph{Tight scheme.} 
We must choose a finite subgroup $H \subset  \U(1)$ for which an equivariant UEB exists. In~\cite{Verdon2018} the largest such subgroup was shown to be $H \simeq \mathbb{Z}_4$, with a two-parameter family of equivariant UEBs:
\begin{align*}
U_0 =  R_{\hat{z}}(\theta-\pi)
&&U_1 = R_{\hat{z}}(\phi) X R_{\hat{z}}(-\phi)
&&U_2 =  R_{\hat{z}}(\phi) Y R_{\hat{z}}(-\phi)
&&U_3 =R_{\hat{z}}(\theta)
\end{align*}
The Pauli UEB is the member of this family with parameters $\theta = \pi, \phi = 0$.  The tight reference frame encoding scheme for this family of UEBs was given in Figure~\ref{fig:examplemeasregions}.

We use Theorem~\ref{thm:maintheorem} to calculate the superoperator for the effective channel. Because the group is abelian, conjugation by $\pi(c_i)$ is irrelevant, so the channel will be identical for measurements 1 and 2. 
For a similar reason we need only consider the Pauli UEB, since all UEBs in the family yield identical channels. It is easy to derive an analytic expression for $p(\theta)$:
\begin{equation}
p(\theta) = \abs*{\frac{(\theta+\pi/2)}{\pi} - \floor*{\frac{1}{2} + \frac{(\theta+\pi/2)}{\pi}}}
\end{equation}
The effective channel when Alice measures 1 is
\begin{equation}
4 \int_{-\pi}^{\pi} \text d\theta\, p(\theta)\,[\rho(\theta)^{\dagger} U_1 \rho(\theta) U_1^{\dagger}] (\sigma),
\end{equation}
and the channel for result 2 is similar. Since measurement results 0 and 3 yield perfect teleportation, we obtain the action~\eqref{eq:u1dmtwirl} of the effective channel on input density matrices. The normalised map purity for this effective channel is $$1+ 0.5 \ln(0.5) \simeq 0.65.$$

\subsection{Calculations for $\SU(2)$}
\label{sec:su2calculations}
We now consider the case  $G=\SU(2)$, acting on a qubit state by its defining representation.

\paragraph{Conventional scheme.}
We have a channel of the form \eqref{eq:randomunitarypurityequation}, which involves integration over $\SU(2)$. In order to obtain a parametrisation and measure for the integral, we use the isomorphism between $\SU(2)$ and the unit quaternions. These quaternions, being diffeomorphic to the 3-sphere $S^3$, may be parametrised by hyperspherical coordinates $(\theta,\psi,\phi) \in D$, where $D=[0,\pi] \times [0, \pi] \times [0,2\pi]$. This parametrisation is inherited by $\SU(2)$, along with the Haar measure $\text d\Omega$ on $S^3$, as follows: 
\begin{align*}
g(\theta,\psi,\phi) &= \begin{pmatrix}
\cos(\theta) + i \sin(\theta)\sin(\psi)\sin(\phi) & \big(\! \cos(\psi)+i\cos(\phi)\sin(\psi)\big)\sin(\theta) \\
- \big( \!\cos(\psi) - i \cos(\phi)\sin(\psi) \big) \sin(\theta) & \cos(\theta) - i \sin(\phi)\sin(\psi)\sin(\theta)
\end{pmatrix}
\\
\text  d\Omega &= \frac{1}{2\pi^2}\sin^2(\theta) \sin(\psi) \, \d\theta \d\psi \d\phi
\end{align*} 

We consider the integrand. Expanding the UEB elements in the form $U \mathcal{E} V$, where $U, V$ are arbitrary unitary matrices and $\mathcal{E} = \{ X_0,X_1,X_2,X_3 \}$ is the Pauli UEB, we see that the unitaries of the channel will be, for all $Y \in \SU(2)$ and $i \in I=\{1, \dots, 4 \}$,
\begin{align*}
U(Y, i) &= Y V^{\dagger} X_i^{\dagger} U^{\dagger} Y^{\dagger}  U X_i V \\
                &\sim V Y V^{\dagger}  X_i^{\dagger} U^{\dagger} Y^{\dagger}  U X_i
\end{align*}
where the equivalence is again a consequence of the cyclicity of the trace in \eqref{eq:randomunitarypurityequation}. We therefore obtain the following equation for the map purity:
\begin{equation}\label{speakablepurityforsu(2)}
P(\T) = \frac{1}{32} \int_{D \times D} \d\Omega_1 \d\Omega_2 |\Tr[ X_i Y_1 X_i\widetilde{U} Y_1^{\dagger} Y_2 \widetilde{U}^{\dagger} X_j Y_2^{\dagger} X_j ]|^{2}
\end{equation}
Here we performed a change of variables from $Y_i$ to $\widetilde{Y}_i = VY_iV^{\dagger}$, using the invariance of the measure; we omit the tilde on the new variable. We also wrote $\widetilde{U} := VU$; note that this is the the only significant element in our choice of UEB.

There are only  three relevant angle variables in the choice of UEB, corresponding to a choice of a single unitary $\widetilde{U} := VU$. We performed random sampling of 100 angle triples; none of these UEBs outperformed the Pauli matrices, whose normalised map purity is $$1-\frac{1}{2\ln(4)}\left(\ln\left(\frac{1}{2}\right)+\ln\left(\frac{1}{6}\right)\right) \simeq 0.21.$$

\paragraph{Tight scheme with rod channel.} The action on the rod channel considered in Section~\ref{sec:example} can be most easily expressed using the inner product--preserving isomorphism of $\SU(2)$-spaces 
\begin{equation}
\begin{split}
S^2 \subset \mathbb{R}^3 &\overset{\alpha}{\to} B(\mathbb{C}^2) \\
(n_x, n_y, n_z) &\mapsto \frac{I + (n_x,n_y,n_z)\cdot (X,Y,Z)}{2},
\end{split} 
\end{equation}
where $I, X, Y$ and $Z$ are the Pauli matrices, $S^2$ carries the obvious quotient left action of $\SU(2)$, and $B(\mathbb{C}^2)$ carries the left action of $\SU(2)$ by conjugation. The encoding and decoding regions are then made up of Voronoi cells for the cardinal points under the metric derived from the Hilbert-Schmidt inner product.

Using the above identification, we calculated $p(g)$ and the integral~\eqref{eq:integralfromexample} using Monte Carlo integration with rejection sampling~\cite{Reiher1966}, took the average over the four measurement results, and found normalised map purity $0.44 \pm 0.03$.

\paragraph{Tight scheme with reference frame channel.} 

Again, we choose the largest possible subgroup $H \subset  \SU(2)$ for which an equivariant UEB exists; in previous work~\cite{Verdon2018} this was shown to be $H \simeq \BOct$, where $\BOct$ is the binary octahedral group, which has order 48. The Pauli UEB is, up to phase, the unique UEB equivariant for this subgroup.

We choose the encoding and decoding regions to be Voronoi cells for the elements of $\BOct < SU(2)$ under the Frobenius distance function. 

\ignore{
We show in Appendix~\ref{sec:Voronoicells} that the Frobenius distance function (Definition \ref{ex:frobdistancefunction}) is an invariant distance function for $\SU(2)$ such that the Voronoi cell of the identity of any subgroup in $\SU(2)$ is a fundamental domain (Proposition \ref{prop:frobidvcellisfunddomain}). Let $F$ be the Voronoi cell of the identity element of the subgroup $H= \BOct$. 

The channel is perfect for measurement result 0. However, $\{U_1,U_2, U_3\}$ is a 3-orbit up to a phase under the conjugation action, isomorphic as a right $H$-set to the right coset space obtained by taking the quotient of $H$ by a certain subgroup~$K= \text{Stab}(U_1) \subset  H$.
\ignore{$$K=\text{Stab}(U_1) = \{I, -I, iX,i Y, iZ, -iX, -iY, -iZ\}.$$} We choose right coset representatives of $K$ in $H$ as follows: 
\begin{align*}
c_1 =
\begin{pmatrix}
1 & 0 \\
0 & 1 
\end{pmatrix} 
&&c_2 =
\frac{1}{2}\begin{pmatrix}
 1-i & -1-i \\
 1-i & 1+i \\
\end{pmatrix} 
&& c_3 = \frac{1}{\sqrt{2}}
\begin{pmatrix}
 1+i & 0 \\
 0 & 1-i \\
\end{pmatrix}
\end{align*}
The channel expression is given by Theorem~.
}
We evaluated the integral in Theorem~\ref{thm:maintheorem} using Monte Carlo integration with rejection sampling, took the average over the four measurement results, and calculated the normalised map purity of the effective channel as $0.32 \pm 0.02$.

\end{document}